\theoremstyle{plain}
\newtheorem{thm}{Theorem$\!$}
\newenvironment{theorem}
{\begin{thm}\hspace*{-1ex}{\bf.}}{\end{thm}}
\newtheorem{lem}[thm]{Lemma$\!$}
\newenvironment{lemma}{\begin{lem}\hspace*{-1ex}{\bf.}}{\end{lem}}
\newtheorem{prop}[thm]{Proposition$\!$}
\newenvironment{proposition}{\begin{prop}\hspace*{-1ex}{\bf.}}{\end{prop}}
\newtheorem{cor}[thm]{Corollary$\!$}
\newenvironment{corollary}{\begin{cor}\hspace*{-1ex}{\bf.}}{\end{cor}}
\newtheorem{defn}{Definition$\!$}
\newenvironment{definition}{\begin{defn}\hspace*{-1ex}{\bf.}}{\end{defn}}
\newtheorem{xmpl}{Example$\!$}
\newenvironment{example}{\begin{xmpl}\hspace*{-1ex}{\bf.}}{\end{xmpl}}
\newtheorem{const}{Construction$\!$}
\newenvironment{construction}{\begin{const}\hspace*{-1ex}{\bf.}}{\end{const}}
\newtheorem{prob}[thm]{Problem$\!$}
\newcounter{enumrom}
\renewcommand{\theenumrom}{(\roman{enumrom})}
\renewcommand{\@endtheorem}{\endtrivlist}
\renewcommand{\thefigure}{{\@arabic\c@figure}}
\renewcommand{\fnum@figure}{{\bf Figure\,\thefigure}}
\newcommand{\cC}{{\cal C}}
\newcommand{\cF}{{\cal F}}
\newcommand{\cM}{{\cal M}}
\newcommand{\cN}{{\cal N}}
\newcommand{\cR}{{\cal R}}
\newcommand{\be}[1]{\begin{equation}\label{#1}}
\newcommand{\ee}{\end{equation}}
\renewcommand{\leq}{\leqslant}
\renewcommand{\geq}{\geqslant}
\newcommand{\Cref}[1]{Co\-ro\-lla\-ry\,\ref{#1}}
\newcommand{\farea}{Z_1}
\newcommand{\sarea}{Z_2}
\newcommand{\iarea}{Z_i}
\DeclareMathAlphabet{\mathbfsl}{OT1}{cmr}{bx}{it}
\newcommand{\vect}[1]{\boldsymbol{#1}}
\newcommand{\cellnum}{n}
\newcommand{\seqN}{L}
\mathchardef\mhyphen="2D
\newcommand{\wi}{\ensuremath{d}-imbalance }
\newcommand{\wiAB}{\ensuremath{d}\mhyphen imb }
\newcommand{\betag}{\tilde{\beta}}
\outer\def\proclaim #1. #2\par{\medbreak
 \noindent{\bf#1.\enspace}{\sl#2\par}%
 \ifdim\lastskip<\medskipamount \removelastskip\penalty55\medskip\fi}
\mathchardef\inn="3232
\renewcommand{\in}{{\,\inn\,}}
\newcommand\yc[1]{{\color{magenta}#1}}
\begin{document}

\sloppy

\title{$d$-imbalance WOM Codes for Reduced Inter-Cell Interference in Multi-Level NVMs}

\author{ \authorblockN{ \textbf{Evyatar Hemo} and \textbf{Yuval Cassuto}}
\authorblockA{ Department of Electrical Engineering, Technion -- Israel Institute of Technology \\}
{\it  evyatar.hemo@gmail.com, ycassuto@ee.technion.ac.il}
}
% make the title area
\maketitle

\maketitle

\begin{abstract}
In recent years, due to the spread of multi-level non-volatile memories (NVM), $q$-ary write-once memories (WOM) codes have been extensively studied. By using WOM codes, it is possible to rewrite NVMs $t$ times before erasing the cells. The use of WOM codes enables to improve the performance of the storage device, however, it may also increase errors caused by inter-cell interference (ICI). This work presents WOM codes that restrict the imbalance between code symbols throughout the write sequence, hence decreasing ICI. We first specify the imbalance model as a bound $d$ on the difference between codeword levels. Then a $2$-cell code construction for general $q$ and input size is proposed. An upper bound on the write count is also derived, showing the optimality of the proposed construction. In addition to direct WOM constructions, we derive closed-form optimal write regions for codes constructed with continuous lattices.

On the coding side, the proposed codes are shown to be competitive with known codes not adhering to the bounded imbalance constraint. On the memory side, we show how the codes can be deployed within flash wordlines, and quantify their BER advantage using accepted ICI models.
\end{abstract}

\section{Introduction}
%Due to inherent physical limitations, multi-level non-volatile memories (NVMs), in particular flash memories, can endure relatively low numbers of program and erase (P/E) cycles. This is because program and erase operations \emph{wear out} the memory cells, until they can no longer retain information. In order to avoid a scenario in which some pages are worn faster than others, \emph{wear leveling} techniques are applied, commonly by logical-to-physical page mappings that balance physical-page wear~\cite{storage}\frage{yc:maybe better to have reference from storage systems community?\textcolor[rgb]{0.50,1.00,0.00}{eh:DONE}}.

In many multi-level non-volatile memory (NVM) technologies there is an inherent asymmetry between increasing and decreasing the level to which a cell is programmed. In particular, in flash memories cell levels are represented by quantities  of electrical charge, and removing charge is known to be much more difficult than adding charge. This asymmetry implies significant access limitations, whereby erasing cells must be done simultaneously in large groups of order $~10^6$ cells (called blocks). From this limitation stem many of the serious performance issues of flash, most prominently low write rates and accelerated cell wear. \\
A possible solution for reducing erasure operations and increasing the lifetime of flash memories is using write-once memory (WOM) codes. The use of WOM codes in flash memories enables multiple writes before executing the costly erasure operation. Indeed, it was shown~\cite{saher} that by using WOM codes in multi-level NVMs, it is possible to reduce write amplification, and thus increase the lifetime of the device. This justifies the recent extensive study of $q$-ary WOM codes~\cite{Bhatia,Yuval,s2,s5,s4} that generalize the original binary WOM model~\cite{Rivest} to multi-level flash. \\
In light of this promise, the main issue holding back WOM codes from deployment seems to be concerns related to inter-cell interference (ICI)~\cite{cai}. Since WOM codes allow updating pages {\em in place} and non-sequentially, there is a potential risk that these updates will disturb adjacent pages. The risk of ICI disturbance becomes more significant as cell levels are updated to much higher levels than their neighbors. \\
%While overall a WOM code contributes to lower wear, different cells in the array can vary significantly in their wear rates. That is because depending on the written data, some cells in the code block will be programmed more times than others. Wear imbalance between cells is an issue in flash, because various read and write processes are adapted as a function of the wear, which degrades data reliability in cells whose wear is in mismatch with the assumed value.
Hence in this paper we propose ICI mitigating WOM codes that maintain a degree of balance between the physical levels of the cells throughout the write sequence. As a consequence, the level difference between adjacent memory cells is constrained to be up to an imbalance parameter $d$ chosen for the code.

In Section~\ref{sec:optimal_cnst} we present our main contribution: a $d$-imbalance $2$-cell WOM-code construction that yields codes for general values of $q$ and input sizes. We also derive an upper bound on the number of guaranteed writes given the imbalance parameter, and show that our construction is optimal. With a comparison table we show that the numbers of writes our codes offer are favorable even relative to unconstrained existing codes. The uniqueness of this work over prior ICI codes is that it mitigates ICI within the WOM framework. Whereas known ICI-WOM codes~\cite{prev2} only constrain the transition of individual cells at an individual write, our codes jointly maintain balance between the symbols of the WOM codeword throughout the write sequence.

In section~\ref{sec:EWLbhatia} we pursue \wi WOM codes using the {\em lattice-based} WOM construction technique developed in~\cite{Bhatia,bhatia2,s4}. We derive in closed form the optimal continuous write regions with the \wi constraint for $n=2$ cells and $t=2$ writes. The optimal boundary between the write regions turns out to be a {\em parabola}, while the known optimal boundary in the unconstrained case is known to be a hyperbola~\cite{s4}. Another curious fact we find is that for the \wi case the optimal sum-rate is achieved by constant-rate codes, in contrast to classical unconstrained codes exhibiting a gap between optimal variable- and fixed-rate codes. The information loss from coding in short $2$-cell blocks is minor considering the advantages. For example, it was shown~\cite{Yuval} that when the input size $M$ of the code is order $\sqrt{q}$, the ratio between the $2$-cell code's sum-rate to the (variable-rate) WOM capacity tends to $1$ as $q$ grows.

In Section~\ref{sec:practical} we gear more toward practical realization of the codes and show how multiple WOM codewords can be concatenated in a flash wordline to maintain the \wi constraint globally. In addition, we analyze the improvement in worst-case ICI expected when cells are constrained with the \wi property.

\section{Background and Definitions} \label{sec:def}
\subsection{Inter-cell interference (ICI)}
In flash memories, changing the electrical charge of one floating-gate transistor can change the charge of its neighboring transistors through the parasitic capacitance-coupling effect~\cite{ici-effect}. This effect is referred to as inter-cell interference (ICI), and it is one of the most dominant sources for errors in flash memories. In addition, with the continuing process of scaling down cell sizes, the distance between adjacent cells becomes smaller. As a result, the parasitic capacitance between a cell and its neighbor cells increases, which in turn increases the ICI. \\
Moreover, as was shown in~\cite{berman}, the write process is also a key feature in the ICI mechanism. NAND flash devices commonly use the {\em incremental step pulse program} (ISPP) write method to mitigate cell variability~\cite{ispp}. In the ISPP method each program level induces a sequence of program pulses followed by a verification process to assure proximity to the target level. Each program step increases the voltage level of a cell by $\Delta V_{pp}$, which is significantly smaller than the actual voltage levels representing memory values. As was described in~\cite{berman}, when cells are programmed by ISPP, it is possible to compensate ICI errors in the cells that have not yet reached their target values. If during the write sequence cell $\#1$ causes ICI in cell $\#2$, it can be detected by the verification process of cell $\#2$, and the program steps of cell $\#2$ may be modified to compensate for this ICI. However, when a certain cell already reached its target level, ICI from a neighbor cell cannot be compensated and may cause a write error.\\
The accepted conclusion from the ICI behavior described above is that ICI errors are more likely when the difference between target levels of adjacent cells is high~\cite{berman,ici-ber}. Therefore, a coding scheme that balances voltage levels of adjacent cells, forbidding significant voltage differences, is likely to reduce ICI errors. Detailed analysis of the ICI and its effect on the bit-error rate (BER) appears in Section~\ref{subsec:ici_an}.

\subsubsection*{ICI due to lateral charge spreading}
Recently, a new $3D$ vertical \emph{charge-trap} flash memory was commercially introduced~\cite{samsung}. This flash device was reported to have low ICI from the capacitance-coupling effect, however, it suffers from ICI due to charge migration between adjacent cells, termed as \emph{lateral charge spreading} effect. It was shown in~\cite{mod} that if the level difference between adjacent cells is small, the lateral charge spreading effect is significantly reduced. Therefore, a coding scheme that balances the charge levels of adjacent cells is similarly warranted for this new form of ICI.

\subsection{WOM codes}
Our focus in this paper is on limited-imbalance codes in the WOM model, because the in-place re-writing of WOM codes makes them especially prone to ICI. We first review some necessary background on $q$-ary WOM codes.
\begin{definition}
A fixed rate \textbf{WOM code} $\cC\left(\cellnum,q,t,M\right)$ is a code applied to a size $\cellnum$ block of $q$-ary cells, and guaranteeing $t$ writes of input size $M$ each.
\end{definition}
A WOM code is specified through a pair of functions: the \emph{decoding} and \emph{update} functions.
\begin{definition}
The decoding function is defined as $\psi:\left\{0, \ldots, q-1\right\}^n \rightarrow \left\{0, \ldots, M-1\right\}$, mapping the current levels of the $n$ cells to the data input in the most recent write. The update function is defined as $\mu :\left\{0, \ldots, q-1\right\}^n \times \left\{0, \ldots, M-1\right\} \rightarrow \left\{0, \ldots, q-1\right\}^n$, specifying the new cell levels as a function of the current cell levels and the new data value at the input. By the WOM requirement, the $i$-th cell level output by $\mu$ cannot be lower than the $i$-th cell level in the input.
\end{definition}
%\begin{definition}
%The \textbf{sum-rate} $\cR_{sum}$ of a WOM code $\cC\left(n,q,t,\vect{M}\right)$ is defined as
%\begin{equation} \label{eq:sumrate}
%\cR_{sum} = \frac{\sum_{i=1}^{t}\log_2\left(M_i\right)}{\cellnum}. \nonumber
%\end{equation}
%In other words, the sum rate is total number of written bits divided by the number of memory cells.
%\end{definition}
\begin{definition}
The code's \textbf{physical state} is defined as the $n$ $q$-ary levels to which the cells are currently programmed. The code's \textbf{logical state} is the data element from $\{0,\ldots,M-1\}$ returned by $\psi$ on the current physical state.
\end{definition}
A {\em write region} spanned from a physical state is a set of physical states accessible from it under the WOM requirement. The size of this set we call the {\em area} of the write region. If at a given physical state the code admits more write(s), then this physical state must span a write region with area at least $M$.
\begin{example} \label{ex1}
Let us consider two sample WOM codes. In Fig.~\ref{fig:Example1} (a) we have the decoding function of $\cC\left(n=2,q=7,t=3,M=8\right)$ constructed by Construction $3$ in~\cite{Yuval}. This code is applied on a pair of $q=7$-level memory cells, enabling $t=3$ guaranteed writes of size $M=8$ each. In Fig.~\ref{fig:Example1} (b) we have a code $\cC\left(n=2,q=7,t=3,M=8\right)$, offering the same number of writes.
\begin{figure}[htbp]
   \centering
   \includegraphics[height=2.1in,keepaspectratio=true,width=0.48\textwidth]{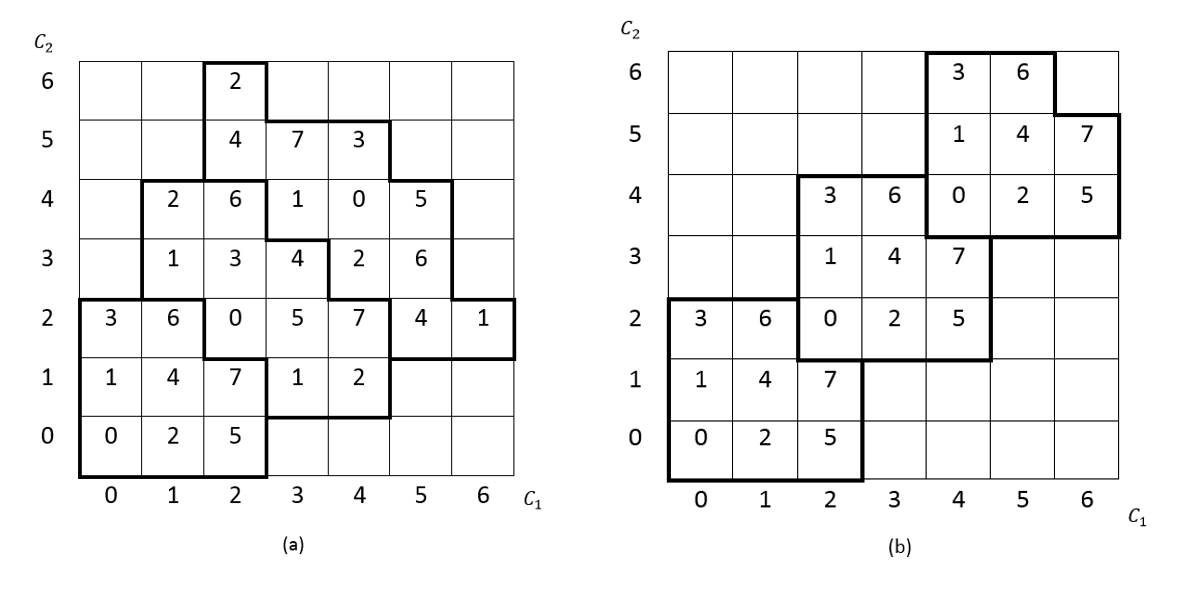}
   \caption{Sample $n=2$ WOM constructions (from \cite{Yuval}). (a) - Decoding function $\psi$ for a code $\cC\left(2,7,3,8\right)$. (b) - Decoding function $\psi$ for another code $\cC\left(2,7,3,8\right)$. Physical states are represented by $\left(c_1,c_2\right)$ and logical states are labeled inside each square.}
   \label{fig:Example1}
\end{figure}
Considering Fig.~\ref{fig:Example1} (a), let us assume we want to perform three writes of the logical states $7$, $6$ and $2$ using this WOM code. For the first write the logical state is $7$ and the physical state is $\left(2,1\right)$. When updating the logical state to $6$, the physical state becomes $\left(2,4\right)$. For the third write of $2$, the physical state becomes $\left(2,6\right)$. After the third write, we reach a physical state with level difference of $4$ between the cells. As a consequence, given that the pair of cells are adjacent, cell $1$ is likely to suffer from ICI. The code in Fig.~\ref{fig:Example1} (b) maintains a better balance between the two cell levels, but will offer fewer writes if extended beyond $q=7$.
\end{example}
In order to reduce ICI, we now define the \wi model for WOM codes.
\begin{definition} \label{def:imbalance}
A \textbf{\wi} WOM code $\cC_{\wiAB}\left(n,q,t,M\right)$ is a WOM code that guarantees that after \textbf{each write} the physical states of the cells $c_i$, $1\leq i \leq n$, must satisfy
\begin{equation} \label{eq:wi-d}
\max_{i,j: i\neq j} | c_i-c_j | \leq d,
\end{equation}
for any write sequence.
\end{definition}
A \wi code guarantees that the level imbalance between cells cannot exceed $d$. Therefore, all cells sustain similar (same up to $d$) levels of charge injection, thus imposing control on the ICI disturbance. When $d=q-1$, we get a standard unconstrained WOM code without balancing properties. As $d$ decreases, the balancing improves, but the added constraints may lead to lower re-write capabilities.
\section{Optimal $d$-Imbalance Construction}\label{sec:optimal_cnst}
Before showing our main construction, we prelude this section with a discussion on which $d$ parameters would be interesting to consider. Given $n$ and $M$, the $d$ imbalance parameter of a code $\cC_{\wiAB}\left(n,q,t,M\right)$ cannot be less than $\left\lceil \sqrt[\cellnum]{M} \right\rceil-1$. That is because in any physical state, at most $(d+1)^n$ states are accessible for the next write while keeping the $d$ imbalance constraint. So to be able to write any of the $M$ values in the next write we must have $M\leq (d+1)^n$.
\begin{example}\label{ex:max_balance}
For $n=2$ and $M=8$, the lowest possible imbalance parameter is $d=2$. It turns out that for this extreme case the simple ``diagonal stacking'' construction of Fig.~\ref{fig:Example1}$\left(b\right)$ is an optimal $\cC_{\ensuremath{2}\mhyphen imb}\left(2,q,t,8\right)$ code with $t=\left\lfloor \left(q-1\right)/2\right\rfloor$ writes. It is straightforward to generalize this construction to produce  $d=(a-1)$-imbalance codes $\cC_{\ensuremath{a-1}\mhyphen imb}\left(2,q,t,a^2-1\right)$, for any $2<a \in \mathbb{Z}$, and providing $t=\left\lfloor \left(q-1\right)/\left(a-1\right)\right\rfloor$ writes.
%\begin{figure}[htbp]
%   \centering
%   \includegraphics[height=2.1in,keepaspectratio=true,width=0.5\textwidth]{YuvalBasicConstruction.PNG}
%   \caption{\cite{Yuval} Decoding function $\psi$ of minimal $d=2$-imbalance WOM code $\cC\left(2,8,3,8\right)$.}
%   \label{fig:Basic1}
%\end{figure}
\end{example}
Requiring maximal balance (minimal $d$) results in weak codes with small numbers of writes. A better tradeoff between balancing and re-write efficiency is obtained when $d$ is relaxed from the extreme value, in which case good balancing (low ICI errors) is achieved while getting more writes. This will be the case we handle in our following construction.
\subsection{Construction}
We now turn to present a construction where $d$ is $1$ larger than in the extreme case of Example~\ref{ex:max_balance}.
\begin{construction}\label{const2}
For any $q$, we define an $n=2$ WOM code with $M=a^2-1$, $2<a \in \mathbb{Z}$, and $d = a$-imbalance parameter as follows.
\begin{enumerate}
  \item \emph{Decoding function:} \\
  The decoding function is specified in Fig.~\ref{fig:WIcode}. The number shown at position $\left(c_1,c_2\right)$ represents the logical state as returned by the decoding function $\psi\left(c_1,c_2\right)$.
  %\underline{Initial part:} For cell levels $c_1,c_2$ which satisfy $\left\{ \; \left(c_1,c_2\right) \: | \: 0 \leq c_1,c_2 \leq a-1 \, \backslash  \, c_1=c_2=a-1 \right\}$, the decoding function is defined by the region denoted as write$\#1$ in Fig.~\ref{fig:WIcode}. \\
  %For all cell levels $\left(c_1,c_2\right)$, the decoding function is defined by the regions denoted as write$\#1,\#2,\#3$ in Fig.~\ref{fig:WIcode}. These $3$ writes regions can be concatenated periodically for higher values of $q$. For each period, the assignments of logical states to physical states must be updated accordingly.  \\
  \item \emph{Update function:} \\
  The update function is specified with the aid of $3$ distinctly colored regions in Fig.~\ref{fig:WIcode}, which represent the worst case regions of the $3$ writes. The update function determines the new physical state $(c'_1,c'_2)$ given the current state $\left(c_1,c_2\right)$ and the new value to be written $m$ as follows:
\begin{enumerate}
  \item locate all physical states with $(c''_1,c''_2)\geq \left(c_1,c_2\right)$ element-wise, for which $\psi(c''_1,c''_2)=m$.
  \item $(c'_1,c'_2)$ is chosen as the pair $(c''_1,c''_2)$ that minimizes the sum of coordinates $\left| (c''_1,c''_2) - \left(c_1,c_2\right)\right|$.
\end{enumerate}
The bottom-left region in Fig.~\ref{fig:WIcode} has all $M=a^2-1$ logical states $m_{i,j},\,0\leq i,j \leq a-1$ excluding $i=j=a-1$, accessible for the first write. Each of the other two regions has all the $M=a^2-1$ logical states accessible from every physical state in the region to the left and down. Hence the update function supports any sequence of $3$ written values without exceeding the top-right region.
\end{enumerate}
%These $3$ writes regions can be concatenated periodically for higher values of $q$. %For each period, the assignments of logical states to physical states must be updated accordingly.
\begin{figure}[htbp]
   \centering
   \includegraphics[height=2.9in,keepaspectratio=true,width=0.5\textwidth]{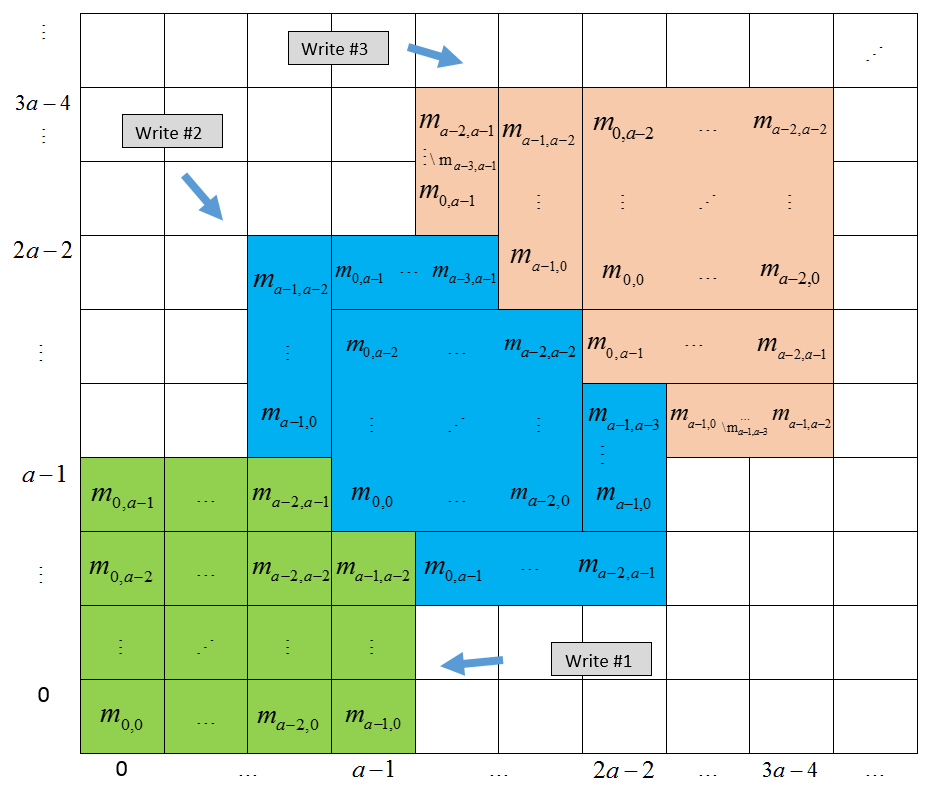}
   \caption{Decoding function of the $d=a$-imbalance WOM code $\cC_{a\mhyphen imb}\left(2,q,t,a^2-1\right)$. The notation $ m_{a-1,0} \underset{\backslash m_{a-1,a-3}}{\cdots} m_{a-1,a-2}$ represents all the logical states $ m_{a-1,0}$ to $ m_{a-1,a-2}$ excluding $m_{a-1,a-3}$. The three colored regions represent the worst case regions of the first three writes of this code. }
   \label{fig:WIcode}
\end{figure}
\end{construction}
We next examine the imbalance parameter of Construction~\ref{const2}. It is straightforward to see from Fig.~\ref{fig:WIcode} that all physical states $(x,y)$ used by the update function satisfy $|y-x|\leq a$, as required.\\
Before discussing the extension of Construction~\ref{const2} beyond $3$ writes, we give an example for the special case $a=3$.

\begin{example}\label{ex:cnst}
In this example, we demonstrate Construction~\ref{const2} for $M=8$ (corresponding to $a=3$), and $q=6$. The $3$ writes of the resulting code $\cC_{3\mhyphen imb}\left(2,6,3,8\right)$ are presented in Fig.~\ref{fig:WIcodeExp}, where the labels $m_{i,j}$ are given by $m_{i,j}=i+ja$.
%\begin{construction}\label{const2}
%Define a $2$-cell WOM code with $M=8$ and $d=3$ imbalance parameter as follows.
%\begin{enumerate}
%  \item \emph{Decoding function:} specified in Fig.~\ref{fig:WIcode} for $q=6$. The number shown at position $(c_1,c_2)$ in Fig.~\ref{fig:WIcode} represents the logical state as returned by the decoding function $\psi\left(c_1,c_2\right)$.
%    \item \emph{Update function:} specified with the aid of $3$ distinctly colored regions in Fig.~\ref{fig:WIcode}, each corresponding to a write in a sequence of $3$ writes. The update function determines the new physical state $\left(c'_1,c'_2\right)$ given the current state $\left(c_1,c_2\right)$ and the new value to be written $m$ as follows:
%\begin{enumerate}
%  \item locate all physical states with $\left(c''_1,c''_2\right)\geq \left(c_1,c_2\right)$ element-wise, for which $\psi\left(c''_1,c''_2\right)=m$.
%  \item $\left(c'_1,c'_2\right)$ is chosen as the pair $(c''_1,c''_2)$ that minimizes the sum of coordinates $\left| \left(c''_1,c''_2\right) - \left(c_1,c_2\right)\right|$.
%\end{enumerate}
%It is clear from the figure that at any region we can update to any value $m$ without exceeding the \underline{next} region to the right and up.
%
%\end{enumerate}
\begin{figure}[htbp]
   \centering
   \includegraphics[height=2.1in,keepaspectratio=true,width=0.5\textwidth]{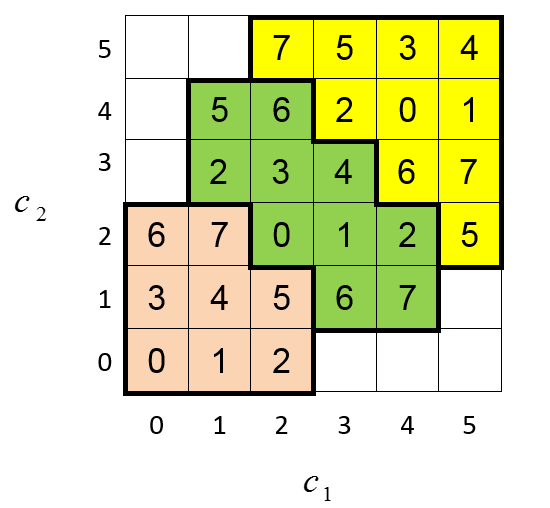}
   \caption{Decoding function and update regions for the $d=3$-imbalance WOM code $\cC_{3\mhyphen imb}\left(2,6,3,8\right)$. The three colored regions represent the worst case regions of the three writes of the code.}
   \label{fig:WIcodeExp}
\end{figure}
%\end{construction}
It can be checked that for any sequence of $3$ written logical states the update function of Construction~\ref{const2} can succeed without exceeding level $5$ at any of the cells.
%adding $4$ (mod $8$) to an even period and by using the labels of Fig.~\ref{fig:WIcode} for an odd period. %It can also be extended to any $M=a^2-1$ and $d=a$ ($a$ any integer) by making each of the three regions large enough to accept a larger set size $M$ of written values.
\end{example}
\textbf{Extending Construction~\ref{const2} to general $\boldsymbol{q}$.}
To extend the decoding and update functions of Construction~\ref{const2} to general $q$, we copy the three regions of Fig.~\ref{fig:WIcode} and lay out the copies such that the origin of a new copy is placed on the top-right corner of the previous copy. Note that such extension requires us to relabel the logical states along the main diagonal because the origin logical state and the top-right logical state are different ($m_{0,0}$ vs. $m_{a-2,a-2}$). Observe in Fig.~\ref{fig:WIcode} that the logical-state values on the main diagonal are of the form $m_{i,i}$, and that these values do not appear elsewhere in the two-dimensional array. This means that we can lay out these values in a cyclic fashion across copies. That is, at physical state $(j,j)$ in the extended array we place logical value $m_{j\bmod a-1,j\bmod a-1}$. Apart from the main diagonal, the extended copies of the three regions have the same assignment of logical values as the base copy of Fig.~\ref{fig:WIcode}. In Example~\ref{ex:cnst} we place the origin of a second copy at physical state $(5,5)$, and get $3$ more writes while relabeling the main diagonal from $0,4,0,4,0,4$ in the first copy, to $4,0,4,0,4,0$ in the second copy.

We now derive the number of guaranteed writes of a \wi code produced by Construction~\ref{const2}.
\begin{theorem} \label{th:wi-t}
For any $q$ and $2<a \in \mathbb{Z}$, a $d=a$-imbalance WOM code $\cC_{a\mhyphen imb}\left(2,q,t,a^2-1\right)$ constructed by Construction~\ref{const2} guarantees
\begin{equation} \label{eq:wi-t}
%t=\frac{3\left(q-r-1\right)}{3a-4}+ \left\lfloor \frac{r}{a-1}\right\rfloor
t=\left\lfloor\frac{3\left(q-1\right)}{3a-4}\right\rfloor
\end{equation}
writes.%, where $r=mod\left(q-1,3a-4\right)$. Note that for the special values of $q$ in which $r=0$, the number of guaranteed writes is given by $t=3\left(q-1\right)/\left(3a-4\right)$.
\end{theorem}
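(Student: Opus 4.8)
The plan is to reduce the theorem to a counting argument governed by the worst-case level reached after each write, and then to isolate the one geometric fact that drives that count. For an arbitrary sequence of written values on the tiled array produced by the extension of Construction~\ref{const2}, I would track the maximal cell level $\max(c_1,c_2)$ attained after each write. The statement I would isolate is that after $k$ writes this maximal level never exceeds
\[
L_k = \left\lceil \frac{k(3a-4)}{3} \right\rceil ,
\]
and that some sequence forces level $L_k$, so $L_k$ is exactly the worst case. Granting this, the theorem follows immediately: a length-$t$ write sequence is supported precisely when $L_t \le q-1$, and since $\lceil t(3a-4)/3\rceil \le q-1$ is equivalent to $t(3a-4)\le 3(q-1)$, the largest admissible $t$ is $\lfloor 3(q-1)/(3a-4)\rfloor$, which is exactly $t$ in~\eqref{eq:wi-t}.

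The core is therefore to establish $L_k$, which I would do by induction on $k$ using the nested structure of the three colored regions of Fig.~\ref{fig:WIcode} together with the cyclic tiling. Within one copy whose origin sits on the main diagonal at level $\ell$, I would read off from the construction that the worst-case (top-right) corner of the region of the first, second, and third write lies at maximal level $\ell+(a-1)$, $\ell+(2a-2)$, and $\ell+(3a-4)$; equivalently, the three successive writes advance the maximal level by $a-1$, $a-1$, and $a-2$. Since the top-right corner of each copy is the origin of the next copy (again on the main diagonal), these increments repeat cyclically, and a telescoping sum over the repeating triple reproduces exactly $L_k=\lceil k(3a-4)/3\rceil$, the non-uniform pattern $a-1,a-1,a-2$ matching the ceiling because $3a-4\equiv 2\pmod 3$. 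For the inductive step I would invoke the accessibility property built into the update function: from every physical state lying in the region of write $k$ (hence weakly below its top-right corner), all $M=a^2-1$ labels are reachable element-wise above within the band $|c_1-c_2|\le a$, and the minimal-sum target lands inside the region of write $k+1$. Because states strictly below the worst-case corner have strictly more room, it suffices to verify the property at the corner itself; this keeps every sequence inside region $k$ after $k$ writes, and hence at level at most $L_k$.

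The step I expect to be the main obstacle is precisely this per-copy bookkeeping: verifying that the three region corners sit at levels $a-1$, $2a-2$, $3a-4$ and that the accessibility property holds uniformly over each region. This needs a careful case analysis of the shapes of the three regions in Fig.~\ref{fig:WIcode}, confirming both that every state in region $k$ reaches each of the $a^2-1$ labels without violating the band constraint and that the largest-advance label yields the increments $a-1,a-1,a-2$; in particular one must check that the ``short'' third increment $a-2$ is genuinely achievable, i.e.\ that near the top-right corner the next diagonal origin is reached one level sooner than a naive count would suggest. Once this geometric claim is settled, the remaining floor/ceiling manipulation establishing~\eqref{eq:wi-t} is routine, and the same worst-case sequence that forces level $L_{t+1}>q-1$ shows that no more than $t$ writes can be guaranteed, giving the equality rather than merely a bound.
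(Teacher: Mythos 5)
Your proposal is correct and rests on exactly the same core fact as the paper's own proof: the periodic extension of Fig.~\ref{fig:WIcode} advances the worst-case cell level by $a-1$, $a-1$, $a-2$ over each cycle of three writes (frontier corners at levels $a-1$, $2a-2$, $3a-4$), which is precisely the counting the paper performs. Your packaging of this count as $L_k=\left\lceil k(3a-4)/3\right\rceil$ together with the integer equivalence $\left\lceil x/3\right\rceil \le q-1$ precisely when $x\le 3(q-1)$ is a somewhat cleaner finish than the paper's split on $t\bmod 3$ and its modular-arithmetic reconciliation of $\frac{3(q-1)-r}{3a-4}$ with the floor expression (and it additionally notes tightness), but it is the same argument in substance.
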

\begin{proof}
With the periodic extension of Fig.~\ref{fig:WIcode}, we know that for $t=3\ell$ writes, $\ell$ integer, $q-1=\ell(3a-4)$ is sufficient. Substituting $\ell=t/3$ we get $t=3(q-1)/(3a-4)$ as required. To complete the proof, we need to show~\eqref{eq:wi-t} for $t=3\ell+r$ writes, also for the cases $r=1,2$. In these cases, the last $r$ writes each increases $q$ by $a-1$. Therefore, we have
\begin{equation} \label{eq:q_rcase}
q-1=\ell(3a-4)+r(a-1).
\end{equation}
Substituting $\ell=(t-r)/3$ and rearranging, we get
\begin{equation} \label{eq:t_rcase}
t=\frac{3(q-1)-r}{3a-4}.
\end{equation}
It appears that the expression in the right-hand side of~\eqref{eq:t_rcase} may be smaller than the right-hand side of~\eqref{eq:wi-t}. We show that this cannot happen. From~\eqref{eq:q_rcase} we know that $q-1 \equiv r(a-1) \pmod {3a-4}$. Therefore, $3(q-1) \equiv 3r(a-1)\equiv r(3a-3)\equiv r \pmod {3a-4}$. Now expanding~\eqref{eq:wi-t}, we get
\begin{equation} \label{eq:t_final}
\hspace*{-0.1cm}
\left\lfloor\frac{3\left(q-1\right)}{3a-4}\right\rfloor = \frac{3\left(q-1\right)}{3a-4} - \frac{3(q-1)\bmod (3a-4)}{3a-4}=\frac{3(q-1)-r}{3a-4},
\end{equation}
which proves ~\eqref{eq:wi-t} for all $t$. The fact that the periodic extension of Construction~\ref{const2} has $d=a$ is immediate from Fig.~\ref{fig:WIcode}.
\end{proof}
Substituting into~\eqref{eq:wi-t} the special case $a=3$, $q=6$, given in Example~\ref{ex:cnst}, we indeed get $t=3$.

\subsection{Upper bound on the guaranteed number of writes}
We now derive an upper bound on the number of guaranteed writes of a \wi code that shows that Construction~\ref{const2} gives optimal codes. Optimality will be proved for the special case $a=3$, that is, for codes with $M=8$ and $d=3$ imbalance. A similar technique can extend the upper bound to more general $a$ values. We start with the following definitions and lemmas.
%\begin{definition}
%For a write starting Let us define a \textbf{frontier} of a write region as a valid state $\left(c_1,c_2\right)$ of this write region which it neighboring states $\left(c_1+1,c_2\right)$ and $\left(c_1,c_2+1\right)$ either do not belong to the same write region or do not exist.
%\end{definition}
\begin{definition}
If a WOM code $i$-th write starts at state $\left(x_i,y_i\right)$ and ends at state $\left(x_{i+1},y_{i+1}\right)$, then the (non-negative) \textbf{write sum} of the $i$-th write is defined as $x_{i+1}-x_{i} + y_{i+1} - y_{i}$.
\end{definition}
The write sum is a powerful notion because lower bounds on total write sums can give upper bounds on the number of writes. For $M=8$ it has been shown~\cite{Yuval} that without balancing constraints, write sum of $3$ is both sufficient and necessary for every write. The following lemma is key to our upper bound, because it shows cases where a write sum of $4$ is necessary.
\begin{lemma} \label{lm:1}
For any code $\cC_{\ensuremath{3}\mhyphen imb}\left(2,q,t,8\right)$, if a write starts from state $\left(x,y\right)$ satisfying $\left|y-x \right|=d=3$, then the write region of $\left(x,y\right)$ must contain at least two states of write sum $4$ (or higher).
\end{lemma}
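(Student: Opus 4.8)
The plan is to bound the write region of $(x,y)$ by showing that very few of its states can have small write sum, and then to invoke the fact that a supported write must make all $M=8$ logical values reachable. By the symmetry of the two cells I may assume without loss of generality that $y-x=3$, i.e.\ $y=x+3$. The write region of $(x,y)$ consists of all physical states $(x',y')$ reachable under the WOM monotonicity requirement together with the imbalance constraint: $x'\ge x$, $y'\ge x+3$, $x',y'\le q-1$, and $|x'-y'|\le 3$. Since the code performs a write from $(x,y)$, for every input value at least one state of this region must be decoded to it by $\psi$; otherwise that value could not be written.

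First I would sort the reachable states by their write sum $s=(x'-x)+(y'-(x+3))=x'+y'-2x-3$. Writing $p=x'-x\ge 0$ and $r=y'-(x+3)\ge 0$, the imbalance constraint $|x'-y'|\le 3$ translates to $r\le p\le r+6$, while $s=p+r$. A direct enumeration of the lattice points satisfying $p+r\le 3$ and $r\le p\le r+6$ yields exactly the six states $(x,x+3)$, $(x+1,x+3)$, $(x+2,x+3)$, $(x+3,x+3)$, $(x+1,x+4)$, $(x+2,x+4)$, with write sums $0,1,2,3,2,3$ respectively. Hence the write region contains \emph{at most} six states of write sum $\le 3$; and if $(x,y)$ lies near the top-right corner, the cap $x',y'\le q-1$ can only delete some of these, so six remains an upper bound for every $q$ and every starting state.

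The conclusion then follows by a covering argument. The at most six states of write sum $\le 3$ are mapped by $\psi$ to at most six distinct logical values. Because all $8$ values must be realized somewhere in the region, at least $8-6=2$ of them are decoded only by states of write sum $\ge 4$. Two distinct logical values force two distinct states, since a single physical state carries a single $\psi$-value; therefore the write region contains at least two states of write sum $4$ or higher, as claimed.

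The only genuinely delicate step is the count: one must apply the imbalance constraint correctly on both sides, obtaining $r\le p\le r+6$, so as to be certain that exactly six lattice points (and no more) have write sum $\le 3$. Everything else is routine: the reduction to $y=x+3$ by cell symmetry, the observation that a supported write forces all $8$ values to appear, and the remark that truncation at $q-1$ only shrinks the low-write-sum count. In particular, the bound ``at most six low-sum states'' is independent of $q$ and of where the imbalanced state $(x,y)$ sits, which is exactly what makes the lemma usable uniformly in the upper-bound argument to follow.
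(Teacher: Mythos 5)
Your proof is correct and takes essentially the same route as the paper's: you enumerate the reachable states of write sum at most $3$ under the imbalance constraint (six of them, counting the start state, i.e.\ state S plus the five states marked X in the paper's figure), and then invoke the requirement that all $M=8$ values be writable from $(x,y)$ to force at least $8-6=2$ states of write sum $4$ or higher. The only difference is presentational: you replace the figure-based enumeration with an algebraic count in the coordinates $(p,r)$, which makes explicit that the bound is uniform in $q$ and in the location of the start state.
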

\begin{proof}
Let us assume w.l.o.g that the start state is state S showing on Fig.~\ref{fig:Lemma1}. All $5$ states marked as X have write sum of $3$ or less. The write sum of states A, B, and C is $4$. Therefore, in order for the write region to have area at least $M=8$, it must include at least two additional states out of A, B, C, or some other state with higher write sum.
\begin{figure}[htbp]
   \centering
   \includegraphics[height=2.1in,keepaspectratio=true,width=0.5\textwidth]{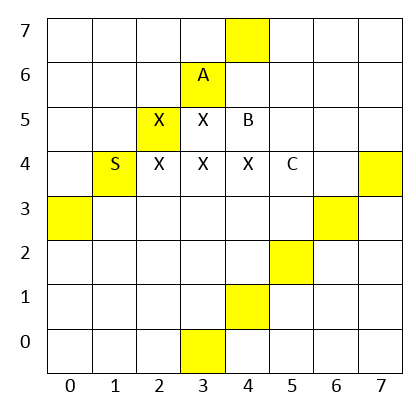}
   \caption{Proof of Lemma~\ref{lm:1} -- only $5$ states (marked by X) have write sum of $3$ or less. States not on or between the shaded diagonals are forbidden due to imbalance greater than $d=3$. }
   \label{fig:Lemma1}
\end{figure}
\end{proof}
The next two lemmas show how any $3$-imbalance code must get to the ``problematic'' state S of Fig.~\ref{fig:Lemma1}.
\begin{lemma} \label{lm:2}
For any code $\cC_{\ensuremath{3}\mhyphen imb}\left(2,q,t,8\right)$, if a write starts from state $\left(x,y\right)$ satisfying $\left|y-x \right|=d-1=2$, then the write region of $\left(x,y\right)$ must contain at least three states of write sum $3$ (or higher), at least one of which has $\left|y'-x' \right|=d=3$ or write sum at least $4$.
\end{lemma}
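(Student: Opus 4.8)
The plan is to mirror the approach already used for Lemma~\ref{lm:1}: fix the start state, enumerate the accessible states ordered by write sum subject to \emph{both} the WOM monotonicity requirement and the imbalance constraint, and then invoke a counting (pigeonhole) argument to force the existence of the required high-cost states. By the symmetry of the imbalance constraint under swapping the two cells, I would assume without loss of generality that $y-x=2$, so that the only nearby boundary is $y'-x'=3$. Writing a candidate next state as $(x+\delta_1,\,y+\delta_2)$ with $\delta_1,\delta_2\geq 0$, the write sum is $\delta_1+\delta_2$, and the imbalance constraint $|x'-y'|\leq 3$ becomes $\delta_2\leq\delta_1+1$, together with a far-side constraint $\delta_1\leq\delta_2+5$ that never binds for small write sums.

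First I would tabulate the admissible states of smallest write sum. One checks there is exactly $1$ state of write sum $0$, exactly $2$ of write sum $1$ (namely $(\delta_1,\delta_2)\in\{(1,0),(0,1)\}$), and exactly $2$ of write sum $2$ (namely $(2,0)$ and $(1,1)$; note that $(0,2)$ violates $\delta_2\leq\delta_1+1$). Hence only $5$ states have write sum $\leq 2$. Since a further write is possible only if the write region has area at least $M=8$, at least $8-5=3$ of its states must have write sum $\geq 3$, which establishes the first part of the claim.

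Next, for the refinement I would list the states of write sum exactly $3$: in the $(\delta_1,\delta_2)$ coordinates these are $(3,0)$, $(2,1)$ and $(1,2)$, and among them only $(1,2)$ lies on the boundary $|x'-y'|=3$, whereas $(3,0)$ and $(2,1)$ both have $|x'-y'|=1$. Thus exactly \emph{two} write-sum-$3$ states are off the boundary. If none of the (at least three) write-sum-$\geq 3$ states in the region satisfied ``$|y'-x'|=3$ or write sum $\geq 4$'', then all three would have to be off-boundary write-sum-$3$ states, which is impossible since only two such states exist. This contradiction yields the second part of the claim, namely that at least one of the forced states has $|y'-x'|=3$ or write sum at least $4$.

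The argument is a finite enumeration, so the only real obstacle is the bookkeeping: confirming that precisely these states are admissible under the combined constraints, and that the number of off-boundary write-sum-$3$ states is strictly less than the number ($3$) forced by the area requirement. I would also note that if the $q$-boundary truncates any of the low-write-sum states, the conclusion only strengthens, since fewer cheap states means even more expensive ones are needed; hence treating the clean interior case suffices. As in Lemma~\ref{lm:1}, a figure showing the start state, the admissible low-cost states marked by X, and the shaded imbalance diagonals would make the enumeration transparent and would serve as the backbone of the write-up.
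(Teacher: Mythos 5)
Your proposal is correct and follows essentially the same route as the paper's proof: fix $y-x=2$ without loss of generality, enumerate the five admissible states of write sum at most $2$ and the three states of write sum exactly $3$ under the combined WOM and imbalance constraints, and use the area requirement $M=8$ to force three write-sum-$\geq 3$ states, of which only the state corresponding to your $(\delta_1,\delta_2)=(1,2)$ lies on the $|y'-x'|=3$ boundary, so its exclusion forces a write sum of at least $4$. The only cosmetic difference is that you carry out the enumeration algebraically in $(\delta_1,\delta_2)$ coordinates (with the added remark about $q$-boundary truncation), whereas the paper reads the same five X-states and states A, B, C off Fig.~\ref{fig:Lemma2}.
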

\begin{proof}
Let us assume w.l.o.g that the start state is state S showing on Fig.~\ref{fig:Lemma2}. All $5$ states marked as X have write sum of $2$ or less. The write sum of states A, B, and C is $3$. Therefore, in order for the write region to have area at least $M=8$ with write sum $3$, it must include the states A, B, C, and state A has $\left|y'-x' \right|=d=3$. If A is not included, then a state with write sum $4$ is required.
\begin{figure}[htbp]
   \centering
   \includegraphics[height=2.1in,keepaspectratio=true,width=0.5\textwidth]{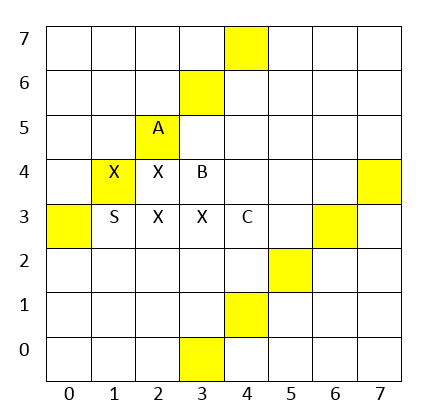}
   \caption{Proof of Lemma~\ref{lm:2} -- state A is required for write sum of $3$ or less.}
   \label{fig:Lemma2}
\end{figure}
\end{proof}
\begin{lemma} \label{lm:3}
For any code $\cC_{\ensuremath{3}\mhyphen imb}\left(2,q,t,8\right)$, if a write starts from state $\left(x,y\right)$ satisfying $\left|y-x \right|=d-2=1$, then the write region of $\left(x,y\right)$ must contain at least two states of write sum $3$ (or higher), at least one of which has $\left|y'-x' \right|=d-1=2$ or write sum at least $4$.
\end{lemma}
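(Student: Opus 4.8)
The plan is to mirror the proofs of Lemmas~\ref{lm:1} and~\ref{lm:2}: fix a worst-case start state, enumerate all states reachable under both the WOM monotonicity requirement and the $d=3$ imbalance constraint, sort these reachable states by their write sum, and then invoke the requirement that the write region have area at least $M=8$ to force the claimed high-write-sum states. As in the earlier lemmas, a figure displaying the start state $S$, the low-write-sum states, and the three next-tier states $A,B,C$ will encode the enumeration.

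First I would reduce to a canonical start state. By the symmetry of the imbalance constraint under swapping the two cells, I may assume without loss of generality $y-x=1$ (the state $S$). Writing a generic reachable state as $(x+a,\,y+b)$ with $a,b\ge 0$, the imbalance constraint $\left|(y-x)+(b-a)\right|\le 3$ becomes $-4\le b-a\le 2$, and the write sum is just $a+b$. The counting step then proceeds by write sum: $a+b=0$ gives only $S$; $a+b=1$ gives $(1,0),(0,1)$; and $a+b=2$ gives $(2,0),(1,1),(0,2)$ — all five of the latter satisfy $-4\le b-a\le 2$ and so are feasible. Hence, including $S$, exactly six states have write sum at most $2$. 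Since the region must have area at least $8$, it must contain at least two further feasible states of write sum $3$ or more, which already establishes the first part of the claim (two states of write sum $3$ or higher).

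The heart of the argument is then pinning down which of these extra states can occur. Among write sum $3$, the candidate increments are $(3,0),(2,1),(1,2),(0,3)$, but $(0,3)$ gives $b-a=3>2$ and is infeasible, leaving $(3,0),(2,1),(1,2)$. Checking $\left|(y-x)+(b-a)\right|$ for each, both $(3,0)$ and $(1,2)$ reach imbalance $2=d-1$, while $(2,1)$ reaches imbalance $0$. Thus $(2,1)$ is the \emph{only} write-sum-$3$ state that is neither of imbalance $d-1$ nor of write sum at least $4$. Since at most one of the two required extra states can be $(2,1)$, the other must have imbalance $d-1=2$ or write sum at least $4$ — exactly the asserted refinement.

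The argument is elementary, so the only real care is bookkeeping rather than any deep obstacle. The step I would watch most closely is the refinement pigeonhole together with the exhaustiveness of the enumeration: I must verify that $(2,1)$ is genuinely the unique ``safe'' write-sum-$3$ state and that no feasible low-write-sum state was missed, since a single miscount would break either the area bound or the refinement. Two further points deserve an explicit sentence: making the swap-symmetry reduction precise, and keeping the area count consistent with Lemmas~\ref{lm:1}--\ref{lm:2} as to whether $S$ itself is included. Finally, I would note that proximity to the top level $q-1$ can only delete reachable states, which can only tighten the lower bound and never weakens the conclusion, so no boundary case needs separate treatment.
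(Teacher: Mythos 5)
Your proof is correct and follows essentially the same route as the paper's: fix the canonical start state with $y-x=1$, count the six reachable states of write sum at most $2$, use the area bound $M=8$ to force two states of write sum at least $3$, and observe that among the three feasible write-sum-$3$ states only the increment $(2,1)$ avoids imbalance $d-1=2$, so the pigeonhole gives the refinement. The only difference is presentational — you carry out the enumeration algebraically where the paper points to states $S$, $X$, $A$, $B$, $C$ in Fig.~\ref{fig:Lemma3} — and your extra remarks (swap symmetry, inclusion of $S$ in the count, boundary near $q-1$) only make explicit what the paper leaves implicit.
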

\begin{proof}
Let us assume w.l.o.g that the start state is state S showing on Fig.~\ref{fig:Lemma3}. All $5$ states marked as X have write sum of $2$ or less. The write sum of states A, B, and C is $3$. Therefore, in order for the write region to have area at least $M=8$ with write sum $3$, it must include two states out of A, B, C, and both A,C have $\left|y'-x' \right|=d-1=2$. If neither of A,C is included, then a state with write sum $4$ is required.
\begin{figure}[htbp]
   \centering
   \includegraphics[height=2.1in,keepaspectratio=true,width=0.5\textwidth]{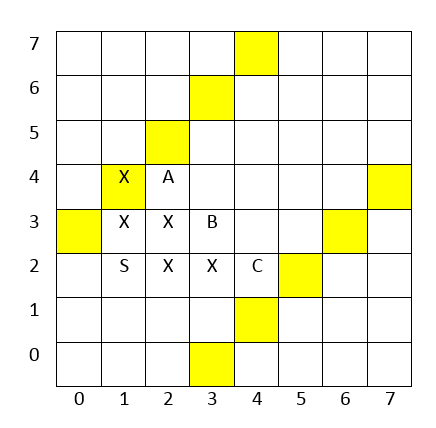}
   \caption{Proof of Lemma~\ref{lm:3} -- A or C are required for write sum of $3$ or less.}
   \label{fig:Lemma3}
\end{figure}
\end{proof}

We are now ready to state the upper bound.
\begin{theorem} \label{th:wiBound}
Given a ${d=3}$-imbalance WOM code $\cC_{\ensuremath{3}\mhyphen imb}\left(2,q,t,8\right)$, the number of guaranteed writes is upper bounded by
\begin{equation} \label{eq:wiBound}
t \leq \left\lfloor \frac{3\left(q-1\right)}{5} \right\rfloor.
\end{equation}
\end{theorem}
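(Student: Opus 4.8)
The plan is to turn the three lemmas into a lower bound on the total write sum and to compare it against the room available in $q$ levels. Writing the physical state after the $i$-th write as $(x_{i+1},y_{i+1})$ and starting from $(x_1,y_1)=(0,0)$, the per-write sums telescope, so that
\begin{equation}
\sum_{i=1}^{t}\bigl(x_{i+1}-x_i+y_{i+1}-y_i\bigr)=x_{t+1}+y_{t+1}\le 2(q-1),
\end{equation}
since each final coordinate is at most $q-1$. Writing $s_i$ for the write sum of write $i$, every write of an $M=8$ code needs $s_i\ge 3$ (necessity of write sum $3$ was recalled just before Lemma~\ref{lm:1}), so if $W$ denotes the number of writes with $s_i\ge 4$ we have $\sum_i s_i\ge 3t+W$. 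Hence it suffices to prove that at least a third of the writes are ``expensive,'' i.e. $W\ge t/3$ up to a bounded additive slack: combined with the budget this gives $2(q-1)\ge 3t+W\ge \tfrac{10}{3}t-O(1)$ and therefore $t\le\lfloor 3(q-1)/5\rfloor$. This matches Construction~\ref{const2} for $a=3$, where one period of three writes advances $(0,0)$ to $(5,5)$, i.e. average write sum exactly $\tfrac{10}{3}$.

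To produce such a sequence I would let the adversary (which is legitimate, since the code must succeed on \emph{every} input sequence) drive the imbalance $|y_i-x_i|\in\{0,1,2,3\}$ upward and read the lemmas as a chain. The point is that a \emph{forced} write of sum exactly $3$ must strictly increase the imbalance: by Lemma~\ref{lm:3} a write started at imbalance $1$ either already costs $4$ or lands at imbalance $2$; by Lemma~\ref{lm:2} a write started at imbalance $2$ either costs $4$ or lands at imbalance $3$; and by Lemma~\ref{lm:1} a write started at imbalance $3$ must cost at least $4$. Consequently the imbalance can rise through $1,2,3$ during at most two consecutive sum-$3$ writes, after which an expensive (sum-$\ge 4$) write is unavoidable; moreover the sum-$4$ states forced at imbalance $3$ return the imbalance to a value in $\{1,3\}$ rather than to $0$ (this is the geometry of states A, B, C in Fig.~\ref{fig:Lemma1}). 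Thus, once the imbalance is at least $1$, the write pattern settles into repetitions of the cycle $1\to2\to3\to1$ with write sums $3,3,4$, so that one out of every three writes is expensive and $W\ge t/3$ apart from the initial transient needed to climb out of the starting state $(0,0)$ of imbalance $0$.

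Two steps carry the real work. First, each lemma is phrased in terms of the physical states the write region must \emph{contain}, and I must convert it into a statement about a data value the adversary can \emph{force}: since at most five accessible states have the small write sum (the states marked X in Figs.~\ref{fig:Lemma1}--\ref{fig:Lemma3}), these cover at most five of the eight logical values, so a pigeonhole argument yields a value whose only realizations are the high-sum / high-imbalance states named in the lemma, and that is the value the adversary writes. Second, and this is the main obstacle, the imbalance value $0$ is governed by no lemma, and careless accounting loses an additive constant there. I would control it by showing that imbalance $0$ is entered only at the very beginning: the expensive moves guaranteed by Lemma~\ref{lm:1} never return the imbalance to $0$, so after the first climb the adversary stays within $\{1,2,3\}$, and the single imbalance-$0$ transient contributes only a bounded deficit. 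Verifying that this deficit is small enough for the integer bound to land exactly on $\lfloor 3(q-1)/5\rfloor$ — rather than one larger — is the delicate part, and it is precisely where the exact positions of the X and A/B/C states in the figures must be used.
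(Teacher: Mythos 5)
Your skeleton is the paper's own: chain Lemmas~\ref{lm:3}, \ref{lm:2}, \ref{lm:1} to force write sums $3,3,4$ over each cycle of three writes, and compare the accumulated sum against the budget $2(q-1)$; your pigeonhole conversion from ``the write region must contain these states'' to ``the adversary can force one of them'' is also the correct (implicit in the paper) reading of the lemmas. But the step you designate to carry the main difficulty is false. You claim imbalance $0$ is entered only at the very beginning because the expensive moves guaranteed by Lemma~\ref{lm:1} land at imbalance $1$ or $3$. That only covers deviations occurring at imbalance $3$. Lemmas~\ref{lm:3} and~\ref{lm:2} each have an ``or write sum $\geq 4$'' branch, and neither constrains where such a state lies: from a state $(x,x+2)$ of imbalance $2$, the code may serve the forced value with $(x+3,x+3)$, a write of sum $4$ landing at imbalance $0$. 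So the code can legally re-enter imbalance $0$ arbitrarily often, and your ``single transient'' plan cannot be executed. What actually saves the accounting is not that imbalance $0$ never recurs, but that every return to it is itself an expensive write, so every cycle in the imbalance transition graph (e.g.\ $0\to 1\to 2\to 0$ with sums $3,3,4$) still averages at least $10/3$; this case analysis is what the paper compresses into its remark that write sums larger than the quoted values only let one ``skip lemmas'' and reach the required total sooner.

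The second gap is the one you flag yourself, and it is not a finishing detail: with an unresolved $O(1)$ slack the floor in~\eqref{eq:wiBound} does not come out, and the theorem (hence the optimality claim for Construction~\ref{const2}) is lost. The paper pins the transient exactly: for $t=1+3\ell$, the first write costs $3$ and each subsequent triple costs $10$, so $2(q-1)\geq 3+10(t-1)/3$, giving $t\leq\lfloor 3(q-1)/5+1/10\rfloor=\lfloor 3(q-1)/5\rfloor$. For $t=1+3\ell+r$ with $r\in\{1,2\}$, even the exact count $s\geq 3+10\ell+3r$ against the budget $2(q-1)$ yields only $t\leq\lfloor 3(q-1)/5+(1+r)/10\rfloor$, which exceeds the claimed bound by one whenever $3(q-1)\equiv 4\pmod 5$ (e.g.\ $q=9$: it fails to rule out $t=5$ while the true bound is $4$). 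The paper closes this with the clause in Lemmas~\ref{lm:3} and~\ref{lm:2} that your argument never uses: each guarantees \emph{two} states of write sum $3$, and two distinct states with the same coordinate sum cannot both lie on the diagonal, so the final state can be taken to have $|y-x|\geq 1$, tightening the budget to $x+y\leq (q-1)+(q-2)=2q-3$. Only with this refinement do the remainder cases land on $\lfloor 3(q-1)/5\rfloor$; as proposed, your accounting proves only $t\leq\lfloor 3(q-1)/5\rfloor+1$ for those $q$.
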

\begin{proof}
Throughout the proof we use lower bounds on write sums, but for convenience we omit the term "at least" when we state the value of the write sums. If write sums are strictly larger than the values quoted below, the proof is still correct, and we may even reach the desired lower bound on total write sum earlier by skipping lemmas that assume the lower quoted value. By a simple area argument, after the first write any $M=8$ WOM code needs to use a state with write sum $3$. Next we show that we can invoke Lemmas~\ref{lm:3},~\ref{lm:2},~\ref{lm:1} in sequence to get lower bounds on the write sums of the second, third, and forth write, respectively. Because all states with sum $3$ have $|y-x|$ at least $1$, the conditions of Lemma~\ref{lm:3} are satisfied after the first write (see state S in Fig.~\ref{fig:Lemma3}). By Lemma~\ref{lm:3}, after the second write any code needs a state that satisfies the conditions of Lemma~\ref{lm:2} (see state A in Fig.~\ref{fig:Lemma3} and state S in Fig.~\ref{fig:Lemma2}). By Lemma~\ref{lm:2}, after the third write any code needs a state that satisfies the conditions of Lemma~\ref{lm:1} (see state A in Fig.~\ref{fig:Lemma2} and state S in Fig.~\ref{fig:Lemma1}). Altogether we conclude that for the second, third, and fourth writes we need write sums of $3+3+4=10$. From Lemma~\ref{lm:1}, after the fourth write there are {\em two} states of sum $13$, which implies that one of them has $|y-x|$ at least $1$, and we can again invoke Lemmas~\ref{lm:3},~\ref{lm:2},~\ref{lm:1} in sequence for the subsequent three writes requiring additional $3+3+4=10$ write sums. Continuing this argument periodically, for $t=1+3\ell$ writes, $\ell$ integer, we need a total write sum $s\geq 3+10\ell$, and since $s\leq 2(q-1)$, we get $2(q-1)\geq 3+10\ell=3+10(t-1)/3$. Rearranging, we get $t\leq\lfloor 3(q-1)/5+1/10\rfloor = \lfloor 3(q-1)/5\rfloor$ as needed. To complete the proof, we need to show~\eqref{eq:wiBound} for $t=1+3\ell+r$ writes, for the cases $r=1,2$. In this case, the last $r$ writes each requires write sum of $3$, and we get $s\geq 3+10\ell+3r$. Because both Lemmas~\ref{lm:3},~\ref{lm:2} show existence of {\em two} states with write sum $3$, one of these final states has $|y-x|\geq 1$. Thus we can tighten the relation between $s$ and $q$ to $s \leq x+y \leq q-1 + q-2 =2q-3$. Now we write $2q-3\geq 3+10\ell+3r=3+10(t-1)/3-r/3$. Rearranging, we get $t\leq\lfloor 3(q-1)/5-1/5+r/10\rfloor \leq \lfloor 3(q-1)/5\rfloor$, and for the last inequality we used the fact that $r\leq 2$.
\end{proof}
Note that for $a=3$ the number of writes guaranteed by Construction~\ref{const2} is $\lfloor 3(q-1)/5\rfloor$, which is equal to the upper bound~\eqref{eq:wiBound}, hence Construction~\ref{const2} is optimal.
\subsection{Performance comparison}
Table~\ref{tb:result} presents a summary of known results and bounds for $2$-cell, $M=8$ WOM codes~\cite{Yuval}. By examining the table, we can notice that for $q = 8$, and $q = 16$ (which are currently the practical values of $q$ for NVMs), using Construction~\ref{const2} does not compromise the number of writes compared to optimal unconstrained WOM, while it provides a better imbalance $d=3$. Using $d=2$ constructions does compromise the number of writes for all $q$ values, including $q=8$. It was also verified numerically that for $M=8$, $q\leq 16$ and ${d=3}$-imbalance, codes constructed by Construction~\ref{const2} reach the write-count upper bound of unconstrained codes for \emph{every} value of $q$ in this range. \\
\begin{table} [ht]
\caption{Number of writes for $2$-cell, $M=8$ WOM codes}
\centering
\begin{tabular}{|c||c|c|c|}%7
  \hline
  % \multicolumn{2}{|c||}{$q=8$} & \multicolumn{6}{|c|}{Total number of errors $\left\|\vect{e}\right\|$} \\
  %\hline
  % after \\: \hline or \cline{col1-col2} \cline{col3-col4} ...
  \shortstack{q \\ $\,$ } & \shortstack{t - Upper bound \\ d-unconstrained} & \shortstack{t - Construction~\ref{const2} \\ $d=3$ }& \shortstack{t - Construction in~\cite{Yuval} \\ $d=2$  }\\ \hline %\hhline{|=|=|=|=|}
  \hline
  8 & 4 & 4 & 3   \\
  16 & 9 & 9 & 7   \\
  20 & 12 & 11 & 9  \\
  32 & 20 & 18 & 15 \\
  \hline
\end{tabular}  \label{tb:result}
\end{table}
Actually, as we can see in the next corollary, WOM codes constructed by Construction~\ref{const2} are good WOM codes even if ignoring the \wi property. In the following we compare Construction~\ref{const2} to the best known $2$-cell construction \emph{for general $a$} from~\cite{Yuval}.
\begin{corollary}
When $q \geq 1+ \left\lceil \frac{\left(a^2-2 \right) \left(3a-4\right)}{a-2} \right\rceil$, a WOM code $\cC_{a\mhyphen imb}\left(2,q,t,a^2-1\right)$, $2<a \in \mathbb{Z}$ from Construction~\ref{const2} guarantees higher number of writes than a WOM code $\cC\left(2,q,\tilde{t},a^2-1\right)$ constructed by Construction 2 in~\cite{Yuval}.
\end{corollary}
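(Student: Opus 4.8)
The plan is to compare the two closed-form write counts directly. By \Tref{th:wi-t}, Construction~\ref{const2} guarantees $t=\left\lfloor 3(q-1)/(3a-4)\right\rfloor$ writes, while the best known general-$a$ two-cell construction (Construction~2 in~\cite{Yuval}) guarantees $\tilde t=\left\lfloor (a+1)(q-1)/(a^2-2)\right\rfloor$ writes for the same input size $M=a^2-1$. The corollary thus reduces to identifying the range of $q$ for which
\[
\left\lfloor\frac{3(q-1)}{3a-4}\right\rfloor > \left\lfloor\frac{(a+1)(q-1)}{a^2-2}\right\rfloor .
\]

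First I would dispose of the floors using the elementary implication that $A\ge B+1$ forces $\lfloor A\rfloor\ge\lfloor B\rfloor+1>\lfloor B\rfloor$. Hence it suffices to guarantee the real-valued inequality
\[
\frac{3(q-1)}{3a-4}\;\ge\;\frac{(a+1)(q-1)}{a^2-2}+1 .
\]
Next I would collect the two terms linear in $q$ and simplify their coefficient. A short computation gives
\[
\frac{3}{3a-4}-\frac{a+1}{a^2-2}=\frac{3(a^2-2)-(a+1)(3a-4)}{(3a-4)(a^2-2)}=\frac{a-2}{(3a-4)(a^2-2)},
\]
which is strictly positive exactly because $a>2$. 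The displayed inequality then becomes $(q-1)(a-2)/[(3a-4)(a^2-2)]\ge 1$, i.e. $q-1\ge (a^2-2)(3a-4)/(a-2)$. Since $q$ is an integer, this is equivalent to $q\ge 1+\lceil (a^2-2)(3a-4)/(a-2)\rceil$, which is precisely the threshold in the statement.

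Finally I would record a monotonicity remark to upgrade the single-point bound into the ``for all $q\ge q_0$'' conclusion: the left-hand side $(q-1)(a-2)/[(3a-4)(a^2-2)]$ is increasing in $q$ (positive coefficient for $a>2$), so once the inequality holds at the threshold it persists for every larger $q$. The only genuinely delicate points are (i) invoking the exact write-count formula for Construction~2 of~\cite{Yuval}, which supplies the competitor $\tilde t$, and (ii) the bookkeeping around the floors, namely ensuring that the $+1$ slack built into the sufficient condition truly separates the two \emph{floored} quantities and not merely their real-valued counterparts. It is also worth flagging that the sign of the simplified coefficient hinges on the hypothesis $a>2$, which is exactly what both keeps the competitor strictly weaker in slope and renders the threshold finite; note moreover that the bound is only sufficient, so Construction~\ref{const2} may in fact overtake~\cite{Yuval} for some smaller $q$ as well.
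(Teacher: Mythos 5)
Your proposal is correct and follows essentially the same route as the paper: both compare the closed-form write count from Theorem~\ref{th:wi-t} against $\tilde t=\left\lfloor (q-1)(a+1)/(a^2-2)\right\rfloor$ for Construction~2 of~\cite{Yuval}, discharge the floors by demanding a $+1$ slack in the real-valued inequality, and simplify (using $3(a^2-2)-(a+1)(3a-4)=a-2>0$ for $a>2$) to obtain the threshold $q\geq 1+\left\lceil (a^2-2)(3a-4)/(a-2)\right\rceil$. Your explicit justification of the floor step and the monotonicity-in-$q$ remark merely spell out what the paper leaves implicit.
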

\begin{proof}
The number of writes guaranteed by Construction~\ref{const2} is given by Theorem~\ref{th:wi-t}, while the number writes guaranteed by Construction 2 in~\cite{Yuval} is given by
\begin{equation} \label{eq:const2yuval}
t=\left\lfloor \frac{\left(q-1\right)\left(a+1\right)}{a^2-2} \right\rfloor.
\end{equation}
So, we look for the lowest value of $q$ which guarantees strictly higher number of writes for the $a$-imbalance code. Due to the \emph{floor} function applied on the number of writes, we demand that
\begin{equation}
\frac{3q-3}{3a-4} \geq \frac{\left(q-1\right)\left(a+1\right)}{a^2-2} +1.
\end{equation}
This inequality holds for $q \geq 1+\left(a^2-2\right)\left(3a-4\right)/\left(a-2\right)$. Ceiling this expression ends the proof.
\end{proof}
In addition to offering strictly more writes for these $q$ values, the codes $\cC_{a\mhyphen imb}\left(2,q,t,a^2-1\right)$ have at least as many writes as $\cC\left(2,q,\tilde{t},a^2-1\right)$ for \emph{all values} of $q$. This makes them the best known (unconstrained) WOM codes for these parameters and general $a$ (for $a=3$~\cite{Yuval} has better codes than $\cC\left(2,q,\tilde{t},a^2-1\right)$.)

We can also notice that the number of guaranteed writes offered by Construction~\ref{const2} can reach the unconstrained upper bound for some other values of $M$ and $q$. Table~\ref{tb:result2} presents such pairs of $M$, $q$ values (the $q$ values are taken in the practical range $8 \leq q \leq 16$).
 \begin{table} [ht]
\caption{$M$, $q$ values for which Construction~\ref{const2} attains the (unconstrained) upper bound. }
\centering
\begin{tabular}{|c||c|}%7
  \hline
  % \multicolumn{2}{|c||}{$q=8$} & \multicolumn{6}{|c|}{Total number of errors $\left\|\vect{e}\right\|$} \\
  %\hline
  % after \\: \hline or \cline{col1-col2} \cline{col3-col4} ...
  M & values of q  \\ \hline %\hhline{|=|=|=|=|}
  \hline
  15 & 9   \\  \hline
  24 & 9,10,12,13,16   \\ \hline
  35 & 11,15 \\ \hline
  48 & 13,14 \\
  \hline
\end{tabular}  \label{tb:result2}
\end{table}

\section{Lattice-Based \wi WOM Codes}\label{sec:EWLbhatia}
To present \wi codes in the lattice approach we start with some formal definitions.
\begin{definition}
A variable-rate \textbf{WOM code} $\cC\left(\cellnum,q,t,\vect{M}\right)$ is a code applied to a size $\cellnum$ block of $q$-ary cells, and guaranteeing $t$ writes, where the input size for the $i$-th write is $M_i$ taken from the vector $\vect{M}=\left(M_1, \ldots ,M_t\right)$.
\end{definition}
\begin{definition}
The \textbf{sum-rate} $\cR_{sum}$ of a WOM code $\cC\left(n,q,t,\vect{M}\right)$ is defined as
\begin{equation} \label{eq:sumrate}
\cR_{sum} = \frac{\sum_{i=1}^{t}\log_2\left(M_i\right)}{\cellnum}.
\end{equation}
In other words, the sum rate is the total number of written bits divided by the number of memory cells.
\end{definition}
\subsection{Background and review of known results}
Lattice-based WOM codes were first proposed in~\cite{s4} by Kurkoski, and were further extended by Bhatia et al. in~\cite{Bhatia},\cite{bhatia2}. In the lattice approach, the $n$-dimensional discrete space of physical states $\{0,\ldots,q-1\}^n$ is approximated as the continuous space $[0,q-1]^n$. In that approximation the $i$-th write's input size $M_i$ is approximated by an area $\iarea$ in the continuous space. Given the number of writes $t$, the space $[0,q-1]^n$ is partitioned to $t$ disjoint regions, each allocated to a write in the sequence of $t$ writes. In the process of paritioning the space, each write $i$ is allocated an area $\iarea$. The objective of the partition is to maximize the product of the areas $\prod_{i=1}^{t}\iarea$, because this would approximate maximizing the sum-rate of~\eqref{eq:sumrate}. After the continuous space is partitioned, a discretization algorithm assigns labels to discrete physical states in every region to obtain the WOM decoding function. The advantage of the lattice approach over the direct construction approach of~\cite{Yuval} and Section~\ref{sec:optimal_cnst} is that it can use analytic geometry to find region partitions with good properties. The key disadvantage is that optimality can only be guaranteed for the continuous approximation of the space, while the direct approach yields explicit optimal codes in the true discrete space.

In a nutshell, a $2$-cell $2$-write lattice-based WOM code is constructed by partitioning the $2$-dimensional space $[0,q-1]^2$ into $2$ regions, one for each write. The first write gets allocated an area of $\farea$ confined between the $x$,$y$ axes and the boundary curve (see Fig.~\ref{fig:fig_for_lemma2}). This leaves the second write an area $\sarea$ of a rectangle confined between the boundary curve and the $x=q-1$, $y=q-1$ lines. The boundary curve is chosen to maximize $\farea\cdot \sarea$. It was shown~\cite{s4} that the optimal boundary takes the shape of a rectangular {\em hyperbola}. The constructions of lattice-based WOM codes were generalized to any number of writes $t$~\cite{bhatia2}, and (non explicitly) to any number of cells $n$~\cite{Bhatia}. In~\cite{Bhatia} the lattice approach is applied to both variable-rate and fixed-rate WOM codes. In order to make this paper cohasive, we stick with the notations of Section~\ref{sec:optimal_cnst} (rather than those of the original papers~\cite{Bhatia},\cite{bhatia2}) with one exception: we replace the discrete cardinalities $M_i$ of the input sizes with continuous cardinalities $\iarea$. With taking measures to avoid confusion, we slightly abuse the term sum-rate to describe the continuous areas $\iarea$ in lieu of the discrete input sizes $M_i$.

The following is a restatement of a result from~\cite{bhatia2}.
\begin{theorem}~\cite{bhatia2} \label{th:Bhatia}
The optimal continuous boundary between the writes of a $2$-cell $2$-write lattice-based WOM code is given by the following equation of a hyperbola
\begin{equation} \label{eq:bound}
\beta\left(x\right) = q-1 - \frac{\omega_2 \left(q-1\right)^2}{q-1-x},
\end{equation}
where $x \in \left[0,\left(q-1\right)\left(1-\omega_2\right)\right]$, and
\begin{equation}
\omega_2 = -\frac{1}{2}\left[W_{-1} \left(\frac{-1}{2 \sqrt{e}}\right) \right]^{-1}.
\end{equation}
$W_{-1}$ is the real branch of the Lambert $W$ function~\cite{LambertW} satisfying $W\left(x\right)<-1$.
\end{theorem}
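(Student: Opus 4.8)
The plan is to collapse the functional optimization over all admissible boundary curves into a one-parameter problem, and then solve the resulting stationarity condition with the Lambert $W$ function.

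First I would exploit the WOM constraint to pin down the \emph{shape} of the boundary. A second write started at a boundary point $(x,\beta(x))$ can only raise both cell levels, so it must be able to address the rectangle $[x,q-1]\times[\beta(x),q-1]$, of area $(q-1-x)(q-1-\beta(x))$. Guaranteeing $\sarea$ writable states from \emph{every} boundary point therefore forces $\sarea=\min_x (q-1-x)(q-1-\beta(x))$. An exchange (``push-up'') argument then shows that the optimum equalizes this rectangle area along the entire curve: at any $x$ where the rectangle area strictly exceeds $\sarea$, raising $\beta(x)$ strictly increases the first-write area $\farea$ while leaving $\sarea$ unchanged. Hence the optimal boundary must satisfy $(q-1-x)(q-1-\beta(x))=\sarea$ identically, which is precisely the rectangular hyperbola~\eqref{eq:bound} once we set $\omega_2=\sarea/(q-1)^2$; evaluating it on the axes gives $\beta(0)=(q-1)(1-\omega_2)$ and $\beta=0$ at $x=(q-1)(1-\omega_2)$, which fixes the domain stated in the theorem.

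Next I would reduce the objective to the single parameter $\omega_2$. Integrating the hyperbola yields $\farea=\int_0^{(q-1)(1-\omega_2)}\beta(x)\,dx=(q-1)^2\bigl[(1-\omega_2)+\omega_2\ln\omega_2\bigr]$, an elementary computation via $\int dx/(q-1-x)$. The quantity to be maximized becomes $\farea\cdot\sarea=(q-1)^4\,g(\omega_2)$ with $g(\omega_2)=\omega_2(1-\omega_2)+\omega_2^2\ln\omega_2$. Setting $g'(\omega_2)=1-\omega_2+2\omega_2\ln\omega_2=0$, rearranging to $\ln\omega_2=\tfrac12-\tfrac{1}{2\omega_2}$, exponentiating, and substituting $z=-1/(2\omega_2)$ turns the stationarity condition into $z e^{z}=-\tfrac{1}{2\sqrt e}$, i.e.\ $z=W(-\tfrac{1}{2\sqrt e})$. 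Back-substituting $\omega_2=-1/(2z)$ then recovers the claimed $\omega_2=-\tfrac12\bigl[W(-\tfrac{1}{2\sqrt e})\bigr]^{-1}$.

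The delicate point, and the part I would treat most carefully, is the branch selection. The principal branch gives $W_0(-\tfrac{1}{2\sqrt e})=-\tfrac12$, hence $\omega_2=1$, which is the degenerate endpoint where $g$ vanishes; the genuine interior maximizer must therefore use the lower branch $W_{-1}$, which returns values below $-1$ and an $\omega_2\in(0,\tfrac12)$, exactly as stated. To certify that this stationary point is the global maximizer rather than a spurious root, I would observe that $g(\omega_2)\to0$ as $\omega_2\to0^+$ and at $\omega_2=1$, while $g>0$ on the open interval, so the unique interior critical point is the maximum. I expect the main obstacle to be making the push-up exchange argument of the first step fully rigorous over all admissible monotone boundaries (not only smooth ones), together with this branch bookkeeping; the integration and differentiation themselves are routine.
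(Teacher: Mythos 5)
Your proposal is correct and takes essentially the same approach as the paper, which states this result by formulating the optimization problem \eqref{eq:sumrate_crit} and citing \cite{bhatia2} for its solution; your equal-area exchange argument, the reduction to maximizing $g(\omega_2)=\omega_2(1-\omega_2)+\omega_2^2\ln\omega_2$, and the $W_{-1}$ branch selection supply exactly the details that route leaves to the reference. As a sanity check, your $\omega_2\approx 0.2847$ agrees with the hyperbola constant $13.948/49\approx 0.2846$ in the paper's Example~\ref{ex2}.
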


%\begin{theorem}~\cite{Bhatia} \label{th:Bhatia}
%Given a $q$-ary $\cellnum$-cell and $t$ writes lattice-based WOM code with resulted cardinality $V$, $\cC\left(n,q,t,V\right)$, the unique optimal boundary for write $i$ is given by
%\begin{equation} \label{eq:bound}
%\mathbb{B}^*_i=\left\{\vect{c}\in \mathbb{Z}^{\cellnum}_q: \prod_{j=1}^{\cellnum}\left(q-1-c_j\right) = \prod_{m=t-i+1}^t v_m^*\cdot \left\| \mathbb{Z}^{\cellnum}_q \right\|              \right\}
%\end{equation}
%for all $1\leq i \leq t-1$, where $\Delta \left(v^*_1\right)=1$ \footnote{This definition was changed with respect to the original paper in order to keep optimality.}, and for $\kappa \geq 2$, we get
%\begin{equation} \label{delv}
%\Delta\left(v^*_{\kappa}\right)=v^*_{\kappa} \cdot \Delta\left(v^*_{\kappa-1}\right)
%\end{equation}
%The cardinality $V$ of write $i$, $1\leq i \leq t$ is
%\begin{equation} \label{defV}
%V=\Delta\left(v^*_{t}\right)\cdot \left\| \mathbb{Z}^{\cellnum}_q \right\|.
%\end{equation}
%For $\cellnum=2$, the hyperbola parameters $v^*_{\kappa}$ can be calculated by
%\begin{equation}
%v^*_{\kappa}=\frac{-1}{W_{-1}\left(-\exp\left(-1 -\prod_{m=1}^{\kappa-1}\theta_{m}  \right)      \right)},
%\end{equation}
%where $W_{-1}$ is the real branch of the Lambert $W$ function~\cite{LambertW} and $\theta_{m}$ are defined as:
%
%\begin{equation}
%\theta_{m}\equiv \left\{
%  \begin{array}{cl}
%    \Delta\left(v^*_{1}\right)=1 & m=1  \\ \nonumber
%    v^*_{m} & m \geq 2
%  \end{array}\right. .
%\end{equation}
%
%\end{theorem}
In the $xy$ plane the optimal boundary of Theorem~\ref{th:Bhatia} is given in closed form as the curve $y=\beta\left(x\right)$. This optimal boundary was derived as follows. Let us assume the optimal boundary between the two writes is given by some function $y=\betag \left(x\right)$. $\farea$ is the area under $\betag\left(x\right)$ while $\sarea$ is calculated as the area of a rectangle formed by some point on $ \betag\left(x\right)$ with the $x=q-1$ and $y=q-1$ lines (see Fig.~\ref{fig:fig_for_lemma2}).
\begin{figure}[htbp]
   \centering
   \includegraphics[height=2.1in,keepaspectratio=true,width=0.5\textwidth]{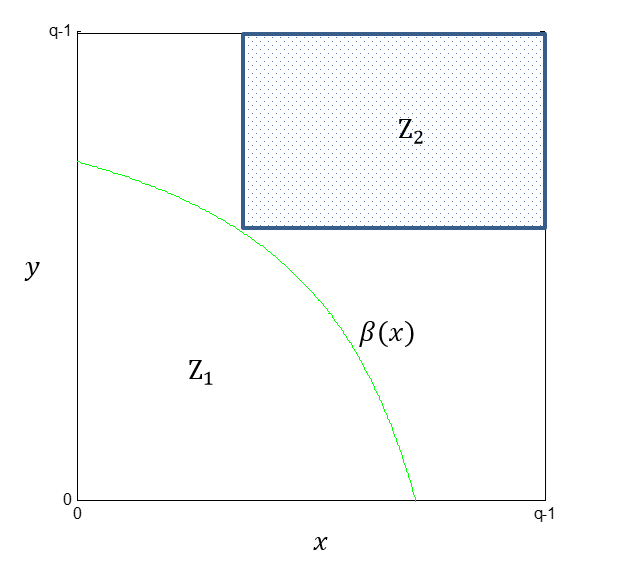}
   \caption{The optimal boundary $\beta\left(x\right)$ of a $2$-cell $2$-write lattice-based WOM code. The $x$ and $y$ axes represent the memory cells $c_1$ and $c_2$ respectively. The cardinality of the first write $\farea$ is the area under $\beta\left(x\right)$. The cardinality of the second write $\sarea$ is the area of the dotted rectangle.}
   \label{fig:fig_for_lemma2}
\end{figure}
In order to get the optimal sum-rate code, we need to find the boundary $\beta\left(x\right)$ that brings $\farea \cdot \sarea$ to a maximum. Given that each point on $\betag\left(x\right)$ can span a rectangle with different area, $\sarea$ is the minimum of the areas of all possible rectangles. Therefore, the following optimization problem~\cite{bhatia2} is solved
\begin{equation} \label{eq:sumrate_crit}
\max_{\betag\left(x\right)}\left\{\int\limits_0^{x_{sup}} \betag\left(x\right) dx  \cdot \min_{\forall x,~ \betag\left(x\right)} \left[ \left(q-1-x\right) \left(q-1-\betag\left(x\right)\right) \right] \right\},
\end{equation}
where $x_{sup}$ is the maximal value of the support of $\betag\left(x\right)$. The solution of~\eqref{eq:sumrate_crit} gives the hyperbola boundary $\beta\left(x\right)$ of Theorem~\ref{th:Bhatia}.
\\After deriving the continuous boundaries between the writes, the WOM code is constructed by discretization and a label assignment algorithm.
\begin{example} \label{ex2}
Let us consider the following WOM code~\cite{Bhatia} $\cC\left(n=2,q=8,t=2,\vect{M}=\left(24,23\right)\right)$. This code is applied on a pair of $8$-level memory cells, enabling $2$ guaranteed writes of input sizes $24$ and $23$ for the first and second writes, respectively. The decoding function of this code is presented in Fig.~\ref{fig:Example2}.
\begin{figure}[htbp]
   \centering
   \includegraphics[height=2.1in,keepaspectratio=true,width=0.5\textwidth]{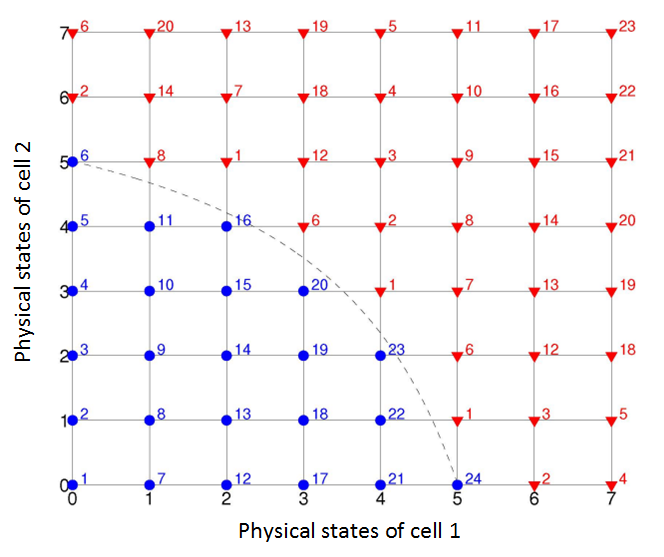}
   \caption{\cite{Bhatia} Decoding function $\psi$ for the code $\cC\left(2,8,2,(M_1,M_2)=\left(24,23\right)\right)$. Logical states for the first write are the labels of the circle physical states, and for the second write are the labels of the triangle physical states. The boundary between the two writes is an hyperbola. }
   \label{fig:Example2}
\end{figure}
The boundary between the two writes is the following hyperbola:
\begin{equation}
\beta\left(x\right) = 7 - \frac{13.948}{7-x}.
\end{equation}
%After calculating the optimal boundary between the two writes, the logical states are assigned by a discretization process (circles and triangles for the first and second writes respectively).
%Let us assume we want to perform two writes of the values $24$ and $11$ using this WOM code. For the first write (marked in blue circles) the logical state is $24$ and the physical state is $\left(5,0\right)$. When updating the logical state to $11$, the physical state becomes $\left(5,7\right)$. During the second write, the change in the charge stored in the second cell reaches the maximal value of $7$. As a consequence, given that the pair of cells are adjacent, cell $\#1$ is likely to suffer from ICI.
\end{example}
\subsection{Construction for lattice-based \wi WOM codes}
In this sub-section we present a construction of \wi WOM codes after applying the imbalance model (Definition~\ref{def:imbalance}) to the continuous approximation of the lattice approach. Our main result toward that is a closed-form characterization of the optimal boundary for $2$-cell $2$-write \wi WOM codes.
\begin{theorem} \label{th:main_result}
When $d \leq \frac{3}{7}\left(q-1\right)$, the optimal boundary for a maximal sum-rate \wi lattice-based WOM code $\cC_{d\mhyphen imb}\left(2,q,2,(\farea,\sarea)\right)$ is given by $\beta_d\left(x\right)$, satisfying
\begin{gather} \label{eq:parabola2}
\left(q-1-x\right)\left(q-1-\beta_d\left(x\right)\right)-\\ \nonumber
\frac{\left(q-1-d-x\right)^2}{2}-\frac{\left(q-1-d-\beta_d\left(x\right)\right)^2}{2}=d\left(q-1\right) -\frac{5d^2}{6},
\end{gather}
and the optimal sum-rate is given by
\begin{equation}
\cR_{sum} = \log_2\left[d\left(q-1\right) -\frac{5d^2}{6}\right].\label{eq:parabola_sumrate}
\end{equation}
\end{theorem}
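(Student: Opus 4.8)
The plan is to mirror the variational derivation of the unconstrained hyperbola (Theorem~\ref{th:Bhatia} and the optimization~\eqref{eq:sumrate_crit}), but to carry it out inside the diagonal band $\cS=\{(x,y)\in[0,q-1]^2 : |x-y|\le d\}$ forced by Definition~\ref{def:imbalance}. As before I would write the objective as $\farea\cdot\sarea$, where $\farea$ is the area of the first-write region (the down-closed part of $\cS$ lying under the boundary $y=\betag(x)$) and $\sarea$ is the minimum, over points of the boundary, of the second-write area reachable up-and-to-the-right while remaining inside $\cS$. The first step is to put this reachable area in closed form: from a boundary point $(x,\betag(x))$ the reachable set is the rectangle $[x,q-1]\times[\betag(x),q-1]$ intersected with $\cS$, which—whenever both offending corners of the rectangle are clipped by the band, i.e.\ $x,\betag(x)\le q-1-d$—equals the rectangle minus the two isosceles right triangles with legs $q-1-d-x$ and $q-1-d-\betag(x)$. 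This is exactly the left-hand side of~\eqref{eq:parabola2} read as $\sarea(x)$.

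Next I would invoke the same optimality principle behind~\eqref{eq:sumrate_crit}: at the maximizer the boundary must be equi-area, $\sarea(x)\equiv C$ constant along the boundary. Otherwise one deforms the boundary outward at a point where $\sarea(x)$ strictly exceeds the minimum, thereby enlarging the first-write region (increasing $\farea$) while only lowering that point's reachable area toward the minimum, which leaves the minimum $\sarea$ unchanged and strictly raises the product. Setting the closed form equal to $C$ is precisely~\eqref{eq:parabola2}; rewriting it as $-\tfrac12(\betag(x)-x)^2+d\,(2(q-1)-x-\betag(x))-d^2=C$ exhibits the level set as a conic whose quadratic part is the degenerate form $(y-x)^2$, hence a parabola symmetric about $y=x$ (its slope is $0,-1,-\infty$ where $\betag-x$ equals $d,0,-d$). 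By this symmetry the boundary runs from $(a,a+d)$ on the upper edge of $\cS$ to the mirror point $(a+d,a)$ on the lower edge, reducing the whole problem to the single parameter $a$.

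The computational core is then to express both areas through $a$ and optimize. Evaluating the constant-$\sarea$ expression at the endpoint gives $C=2d(q-1)-2ad-\tfrac52 d^2$, while integrating the parabola across $x\in[a,a+d]$ (using $\betag(x)-x+d=2\sqrt{d(a+d-x)}$, whose integral equals $\tfrac43 d^2$) together with the truncated band area for $x\le a$ gives $\farea=2ad+\tfrac56 d^2$; I would note that clipping $\cS$ by the axes near the origin does not alter this expression. Writing $P=2ad$, the objective $\farea\cdot\sarea=(P+\tfrac56 d^2)(2d(q-1)-\tfrac52 d^2-P)$ is a downward parabola in $P$, maximized at $P=d(q-1)-\tfrac53 d^2$, i.e.\ $a=\tfrac{q-1}{2}-\tfrac56 d$. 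Substituting back gives $\farea=\sarea=d(q-1)-\tfrac56 d^2$, so the optimum is attained by a constant-rate code and $\cR_{sum}=\tfrac12\log_2(\farea\,\sarea)=\log_2[d(q-1)-\tfrac56 d^2]$, which is~\eqref{eq:parabola_sumrate}.

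Finally I would discharge the geometric hypotheses, which is where $d\le\tfrac37(q-1)$ enters and is, I expect, the main obstacle to making the argument airtight. The two-triangle formula for $\sarea(x)$ is valid only while every boundary point obeys $x,\betag(x)\le q-1-d$; the extreme boundary point is $x=a+d$, so the requirement is $a+d\le q-1-d$, and substituting the optimal $a$ turns this into exactly $d\le\tfrac37(q-1)$ (the same bound also yields $a>0$, so the parabola genuinely fits inside $\cS$). I would therefore structure the write-up as: (i) the reachable-area computation and its validity region; (ii) the equi-area optimality lemma; (iii) identification of the level set as the parabola~\eqref{eq:parabola2}; and (iv) the one-parameter optimization giving~\eqref{eq:parabola_sumrate}. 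The delicate point to get right is step (ii): rather than leaning on the geometric deformation picture, I would phrase it as a first-variation (KKT) condition for the functional~\eqref{eq:sumrate_crit} restricted to $\cS$, so that the equi-area conclusion is rigorous.
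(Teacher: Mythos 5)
Your proposal is correct and is essentially the paper's own proof: your rectangle-minus-two-triangles reachable-area formula and the equi-area optimality principle are exactly Lemma~\ref{lm:opt_b}, your area relation $\farea = 2ad+\frac{5}{6}d^2$ is Lemma~\ref{lm:card} rewritten in the endpoint parameter $a$ (the paper parametrizes by $\sarea$ instead, via $\sarea = 2d(q-1)-2ad-\frac{5}{2}d^2$, a linear change of variable), and your concluding one-parameter quadratic maximization yielding $\farea=\sarea=d(q-1)-\frac{5d^2}{6}$ is the paper's final step. Your validity check $a+d\le q-1-d$ at the optimum is the same geometric fact as the paper's triangle-existence condition $d\le\sqrt{\frac{2}{3}\sarea}$ in Lemma~\ref{lm:opt_b}, both reducing to the hypothesis $d\le\frac{3}{7}(q-1)$.
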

It can be checked that~\eqref{eq:parabola2} implies that the curve $y=\beta_d\left(x\right)$ is a {\em parabola}.

Before proving this Theorem we present the following lemmas. The first lemma finds the shape of the boundary $\beta_d\left(x\right)$ that yields for the second write identical areas among the points on the boundary. Note that with the \wi constraint the areas $\sarea$ are bounded by the lines $y=x\pm d$, in addition to the bounding by the lines $x=q-1$ and $y=q-1$ in the unconstrained case (see Fig.~\ref{fig:fig_for_lemma}.)
\begin{lemma} \label{lm:opt_b}
Given a \wi lattice-based WOM code $\cC_{d\mhyphen imb}\left(2,q,2,(\farea,\sarea)\right)$ with $d\leq\sqrt{\frac{2}{3}\sarea}$, the boundary $\beta_d\left(x\right)$ that yields identical $\sarea$ values for all points on $\beta_d\left(x\right)$ is given by
%\begin{equation} \label{eq:parabola}
%\mathbb{B}^*_1=\left\{\vect{c}\in \mathbb{Z}^{2}_q: $\prod_{j=1}^{2}\left(q-1-c_j\right)-\prod_{j=1}^{2}\frac{\left(q-1-d-c_j\right)^2}{2} = \mathbf{C}             \right\},
%\end{equation}
\begin{gather} \label{eq:parabola}
\left(q-1-x\right)\left(q-1-\beta_d\left(x\right)\right)-\\ \nonumber
\frac{\left(q-1-d-x\right)^2}{2}-\frac{\left(q-1-d-\beta_d\left(x\right)\right)^2}{2}=\sarea.
\end{gather}
\end{lemma}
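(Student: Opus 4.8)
The plan is to read the second-write area straight off the geometry and then impose that it be constant along the boundary. Fix an arbitrary point $(x,\beta_d(x))$ of the sought curve. By the WOM requirement the second write may only move up and to the right, so in the unconstrained case the reachable states fill the rectangle $[x,q-1]\times[\beta_d(x),q-1]$ of area $(q-1-x)(q-1-\beta_d(x))$. Under the \wi constraint every reachable state must in addition lie in the band $|c_1-c_2|\le d$, so the true second-write area is this rectangle intersected with the band (see Fig.~\ref{fig:fig_for_lemma}). Requiring that the intersected area equal the same value $\sarea$ for every point of the curve is exactly the defining property in the statement, so the whole task reduces to computing that area in closed form.

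First I would compute the two pieces of the rectangle that the band removes. The line $c_2=c_1+d$ cuts off the upper-left corner and the line $c_2=c_1-d$ cuts off the lower-right corner. Each excised piece is a right isosceles triangle: the upper-left one has its two legs along $c_1=x$ and $c_2=q-1$, both of length $q-1-d-x$, hence area $\tfrac12(q-1-d-x)^2$; by the symmetric computation the lower-right one has legs of length $q-1-d-\beta_d(x)$ and area $\tfrac12(q-1-d-\beta_d(x))^2$. Subtracting the two triangles from the rectangle and setting the result equal to $\sarea$ yields precisely~\eqref{eq:parabola}, which is the claimed equation of the curve.

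The step needing genuine care — and which I expect to be the main obstacle — is justifying that the two excised pieces are indeed the clean right isosceles triangles used above, with no further clipping by the far sides $c_1=q-1$, $c_2=q-1$ and no overlap, uniformly along the whole curve. This is where the hypothesis $d\le\sqrt{\tfrac{2}{3}\sarea}$ enters. The triangle formula is valid exactly while $q-1-d-x\ge0$ and $q-1-d-\beta_d(x)\ge0$, and the worst case occurs at the endpoints of the curve, where it meets the band edges $c_2=c_1\pm d$ and the coordinate $x$ (respectively $\beta_d(x)$) attains its maximal value $x^\ast$. Evaluating~\eqref{eq:parabola} at such an endpoint and writing $v=q-1-x^\ast$ collapses the expression to $2vd-\tfrac12 d^2=\sarea$, so $v=\sarea/(2d)+d/4$; the validity requirement $v\ge d$ then rearranges to $\sarea\ge\tfrac32 d^2$, i.e. $d\le\sqrt{\tfrac{2}{3}\sarea}$, exactly the stated hypothesis. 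Hence under this hypothesis the area formula, and therefore~\eqref{eq:parabola}, holds along the entire boundary. Checking that the separating curve is monotone within the band — so that the extreme values of $x$ and of $\beta_d(x)$ really are attained at the two endpoints — and invoking the left–right symmetry of the configuration completes the argument.
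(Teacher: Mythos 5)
Your proposal is correct and follows essentially the same route as the paper's proof: the second-write area is computed as the rectangle $\left(q-1-x\right)\left(q-1-\beta_d\left(x\right)\right)$ minus the two right isosceles triangles cut off by the lines $y=x\pm d$, and the hypothesis $d\leq\sqrt{\frac{2}{3}\sarea}$ is obtained exactly as in the paper, by locating the intersection of $\beta_d\left(x\right)$ with $y=x-d$ at $x$-coordinate $q-1-\frac{\sarea}{2d}-\frac{d}{4}$ and requiring it to be at most $q-1-d$. Your additional remarks on monotonicity and symmetry only make explicit what the paper leaves implicit.
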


%\begin{lemma} \label{lm:opt_b}
%The optimal boundary $f\left(x\right)$ between the writes of a $2$-write \wi code is given by the following parabola
%\begin{gather} \label{eq:parabola}
%\left(q-1-x\right)\left(q-1-f\left(x\right)\right)-\\ \nonumber
%\frac{\left(q-1-d-x\right)^2}{2}-\frac{\left(q-1-d-f\left(x\right)\right)^2}{2}=c,
%\end{gather}
%where $c$ is some constant $c \geq 1.5d^2$ that will be found further on.
%\end{lemma}
\begin{proof}
%In order to prove this Lemma, we now need to adjust the optimization problem of~\eqref{eq:sumrate_crit} to the \wi model.
The constraint induced by the \wi model is that all the valid physical states are bound by the lines $y=x\pm d$, as can be seen in Fig.~\ref{fig:fig_for_lemma}.
%The proof of Lemma $1$ in~\cite{bhatia2}, shows that the rectangle formed by a point on $ \beta_d\left(x\right)$ with the $x=q-1$ and $y=q-1$ lines must have the same area for every point on $\beta_d\left(x\right)$. Hence, the minimization over all rectangles is redundant. By the same proof in~\cite{bhatia2},
Therefore, $\beta_d\left(x\right)$ is a function on which every point spans an equal-area shape with the $x=q-1$, $y=q-1$ and $y=x\pm d$ lines. We now turn into calculating this area. First we have the area of a rectangle (denoted by dashed lines in Fig.~\ref{fig:fig_for_lemma}) given by $\left(q-1-x\right)\left(q-1-\beta_d\left(x\right)\right)$.
\begin{figure}[htbp]
   \centering
   \includegraphics[height=2.1in,keepaspectratio=true,width=0.5\textwidth]{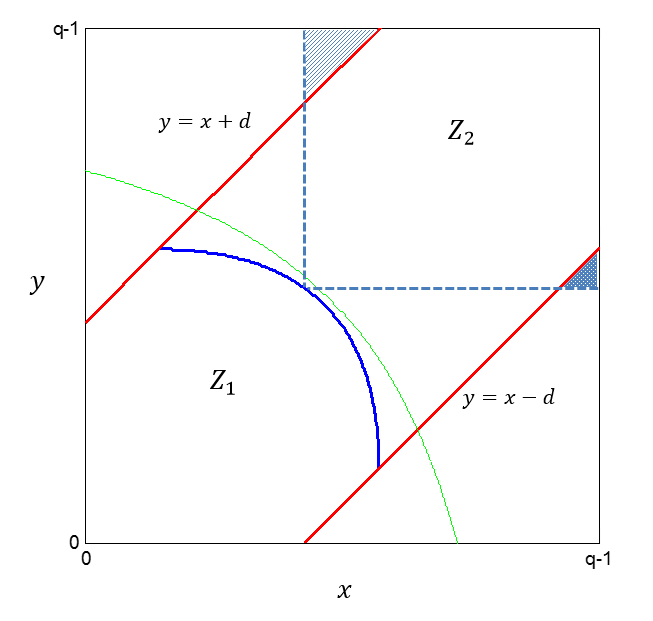}
   \caption{Proof of Lemma~\ref{lm:opt_b}: the boundary inside the $y=x\pm d$ lines is the parabola~\eqref{eq:parabola}, the other boundary is the rectangular hyperbola stated in~\eqref{eq:bound}. The two triangles are the area that should be subtracted from the rectangle (with dashed borders) in order to calculate $\sarea$. The $x$ and $y$ axes represent the memory cells $c_1$ and $c_2$ respectively.}
   \label{fig:fig_for_lemma}
\end{figure}
From this rectangle we need to subtract the area of the two triangles. It is easy to verify that the area of the top triangle is given by $\left(q-1-d-x\right)^2/2$ and the area of the bottom triangle is given by $\left(q-1-d-\beta_d\left(x\right)\right)^2/2$. Subtracting these areas and equating to the cardinality of the second write $\sarea$ yields~\eqref{eq:parabola}. Note that for the two subtracted triangles to exist, the $x$ coordinate of the intersection point between $\beta_d\left(x\right)$ and $y=x-d$ must be at most $q-1-d$. Given that the intersection point is $\left(q-1-\frac{\sarea}{2d}-\frac{d}{4},q-1-\frac{\sarea}{2d}-\frac{5d}{4}\right)$, we get that in order for the triangles to exist we need
\begin{equation}
q-1-\frac{\sarea}{2d}-\frac{d}{4} \leq q-1-d.\label{eq:triangle_cond}
\end{equation}
After some manipulations we get that the condition~\eqref{eq:triangle_cond} is equivalent to $d \leq \sqrt{\frac{2}{3}\sarea}$ given in the Lemma statement.
\end{proof}
The second lemma finds the relation between the areas of the first and second writes for boundaries $\beta_d\left(x\right)$ in the form given in Lemma~\ref{lm:opt_b}.
\begin{lemma} \label{lm:card}
For a $\beta_d\left(x\right)$ given in~\eqref{eq:parabola}, the cardinality of the first write is given by
\begin{equation}
\farea = 2d\left(q-1\right) -\sarea -\frac{5d^2}{3}.
\end{equation}
\end{lemma}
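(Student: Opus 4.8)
The plan is to avoid integrating the implicitly-defined parabola $\beta_d(x)$ directly, and instead use a complementary-area argument. The valid physical states form the strip $V=[0,q-1]^2\cap\{(x,y):|y-x|\le d\}$, and the boundary $\beta_d(x)$ of~\eqref{eq:parabola} splits $V$ into the first-write region $R_1$ (below the curve, of area $\farea$) and the second-write region $R_2$ (above the curve). Since $R_1$ and $R_2$ partition $V$, I would compute the total strip area $A$ together with $\mathrm{Area}(R_2)$, and then read off $\farea=A-\mathrm{Area}(R_2)$.

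First I would compute $A$. The square $[0,q-1]^2$ has area $(q-1)^2$, and the imbalance constraint removes exactly the two congruent right triangles $\{y-x>d\}$ and $\{x-y>d\}$, each with legs of length $q-1-d$. Hence
\begin{equation*}
A=(q-1)^2-(q-1-d)^2=2d(q-1)-d^2 .
\end{equation*}

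The crux is $\mathrm{Area}(R_2)$, and the key observation is that it is \emph{not} equal to $\sarea$: by \Lref{lm:opt_b}, $\sarea$ is the area reachable upward/rightward from a \emph{single} point of the boundary, whereas $R_2$ is the entire region lying above the boundary, and the two differ. To carry out the computation cleanly I would pass to the rotated coordinates $u=2(q-1)-x-y$ and $v=y-x$, under which $dx\,dy=\tfrac12\,du\,dv$. The same algebra that exhibits the curve as a parabola rewrites~\eqref{eq:parabola} as $u=\frac{\sarea+d^2}{d}+\frac{v^{2}}{2d}$, while the imbalance constraint becomes $|v|\le d$ and the conditions $x,y\le q-1$ become $u\ge|v|$ (the corner $(q-1,q-1)$ sitting at $u=0$). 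Integrating the $u$-extent of $R_2$ from $|v|$ up to the parabola then gives
\begin{equation*}
\mathrm{Area}(R_2)=\frac12\int_{-d}^{d}\!\left[\frac{\sarea+d^2}{d}+\frac{v^{2}}{2d}-|v|\right]dv=\sarea+\frac{2d^2}{3}.
\end{equation*}

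Combining the two computations yields $\farea=A-\mathrm{Area}(R_2)=2d(q-1)-d^2-\bigl(\sarea+\tfrac{2d^2}{3}\bigr)=2d(q-1)-\sarea-\tfrac{5d^2}{3}$, as claimed. I expect the main obstacle to be precisely this middle step: one must resist the tempting but incorrect shortcut $\farea=A-\sarea$ (which would produce $-d^2$ instead of $-\tfrac{5d^2}{3}$) and instead correctly account for the ``parabolic cap'' of area $\tfrac{2d^2}{3}$ lying between the curve and the chords of the region reachable from the boundary's vertex. Verifying that $R_2$ is bounded by the full parabola across the whole range $|v|\le d$, rather than being clipped by the edges of the square, is where the hypothesis $d\le\sqrt{\tfrac{2}{3}\sarea}$ inherited from \Lref{lm:opt_b} is used.
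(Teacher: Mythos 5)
Your proof is correct, and it takes a genuinely different route from the paper's. The paper computes $\farea$ head-on in $(x,y)$ coordinates: using the symmetry about $y=x$, it writes $\farea/2$ as a difference of integrals of $\beta_d(x)-x$ and $\beta_d(x)-(x+d)$ between explicitly computed intersection points, which forces it to solve~\eqref{eq:parabola} for $\beta_d(x)$ (a square-root expression) and integrate that. Your complementary-area argument avoids this entirely: the strip area $2d(q-1)-d^2$ is elementary, and the rotated coordinates $u=2(q-1)-x-y$, $v=y-x$ turn~\eqref{eq:parabola} into the explicit parabola $u=\frac{\sarea+d^2}{d}+\frac{v^2}{2d}$, so that $\mathrm{Area}(R_2)$ becomes a one-line integral. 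Your computations all check out (the strip area, the $(u,v)$ form of the curve, the Jacobian factor $\tfrac12$, and $\mathrm{Area}(R_2)=\sarea+\tfrac{2d^2}{3}$), and identifying the ``parabolic cap'' of area $\tfrac{2d^2}{3}$ -- i.e., recognizing that $\mathrm{Area}(R_2)\neq\sarea$ -- is exactly the point where a careless complement argument would fail. What your route buys is a cleaner calculation and a transparent explanation of why the constant is $-\tfrac{5d^2}{3}$ rather than $-d^2$; what the paper's route buys is the explicit formula for $\beta_d(x)$ and its intersection points with $y=x$ and $y=x+d$, which it states and can reuse.

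One side remark in your write-up is inaccurate: the hypothesis $d\le\sqrt{\tfrac23\sarea}$ is not what keeps $R_2$ from being clipped. Along the curve one has $u-|v|=\frac{\sarea}{d}+\frac{(|v|-d)^2}{2d}+\frac{d}{2}>0$ for all $|v|\le d$, so the parabola can never cross the lines $x=q-1$, $y=q-1$; the only possible clipping is by the axes $x=0$, $y=0$, and excluding that requires the $q$-dependent condition $q-1\ge\frac{\sarea}{2d}+\frac{5d}{4}$ (nonnegativity of the curve's intersection points with $y=x\pm d$), which a condition relating only $d$ and $\sarea$ cannot provide. This regularity assumption is implicit in the paper's own proof as well (its lower integration limits must be nonnegative), so you are in the same position as the paper here; the hypothesis $d\le\sqrt{\tfrac23\sarea}$ itself is needed in \Lref{lm:opt_b} to ensure that~\eqref{eq:parabola} really is the equal-area curve, but it plays no role in the pure area computation of this lemma.
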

\begin{proof}
The cardinality of the first write is the area bound by the $x$ and $y$ axes, the $y=x\pm d$ lines, and $\beta_d\left(x\right)$ from~\eqref{eq:parabola}. Due to the $xy$ symmetry of the problem, we first calculate the area between $\beta_d\left(x\right)$ and the line $y=x$. Then we subtract the area between $\beta_d\left(x\right)$ and $y=x+d$, and finally we multiply the outcome by $2$. In order to do so, we first calculate the intersection points between $\beta_d\left(x\right)$ and the lines $y=x+d$ and $y=x$. It is easy to verify that the intersection points are $\left(q-1-\frac{\sarea}{2d}-\frac{5d}{4},q-1-\frac{\sarea}{2d}-\frac{d}{4}\right)$ and $\left(q-1-\frac{\sarea}{2d}-\frac{d}{2},q-1-\frac{\sarea}{2d}-\frac{d}{2}\right)$, respectively. Therefore, the desired area $\farea$ can be calculated by
\begin{gather} \label{eq:intS}
\frac{\farea}{2} = \int\limits_0^{q-1-\frac{\sarea}{2d}-\frac{d}{2}} \left[ \beta_d\left(x\right) -x \right]dx - \int\limits_0^{q-1-\frac{\sarea}{2d}-\frac{5d}{4}}  \left[ \beta_d\left(x\right) - \left(x+d\right) \right]dx \\ \nonumber
= \int\limits_{q-1-\frac{\sarea}{2d}-\frac{5d}{4}}^{q-1-\frac{\sarea}{2d}-\frac{d}{2}} \left[ \beta_d\left(x\right) -x \right]dx + \int\limits_0^{q-1-\frac{\sarea}{2d}-\frac{5d}{4}} d \cdot dx.
\end{gather}
>From Lemma~\ref{lm:opt_b} it is not hard to see that for $x \leq q-1-\frac{\sarea}{2d}-\frac{d}{4} $, $\beta_d\left(x\right)$ is given by
\begin{equation}
\beta_d\left(x\right) = \sqrt{4\left(q-1\right)d-d^2-2\sarea-4dx}-d+x.
\end{equation}
Substituting this $\beta_d\left(x\right)$ in~\eqref{eq:intS} gives the expression
\begin{gather}
\frac{\farea}{2} = -\frac{1}{6d}\left(4\left(q-1\right)d-d^2-2\sarea-4dx\right)^{\frac{3}{2}}\Big|_{q-1-\frac{\sarea}{2d}-\frac{5d}{4}}^{q-1-\frac{\sarea}{2d}-\frac{d}{2}} \\ \nonumber
-\frac{3d^2}{4} + d\left(q-1-\frac{\sarea}{2d}-\frac{5d}{4}\right) = d\left(q-1\right) -\frac{\sarea}{2}-\frac{5d^2}{6}.
\end{gather}
\end{proof}
With the help of Lemmas~\ref{lm:opt_b} and~\ref{lm:card}, we can now prove Theorem~\ref{th:main_result}.\\
\begin{proof}
In order to find the maximal sum-rate of $\cC_{d\mhyphen imb}\left(2,q,2,(\farea,\sarea)\right)$ with imbalance parameter $d$, we now need to adjust the optimization problem of~\eqref{eq:sumrate_crit} to the \wi model, and to find the values of $\farea$ and $\sarea$ that maximize $\farea\cdot \sarea$. By a similar argument to the one proved in~\cite{bhatia2}, the minimization in~\eqref{eq:sumrate_crit} implies that the optimal sum-rate boundary must satisfy the identical-area condition of Lemma~\ref{lm:opt_b}. Then among the $\beta_d\left(x\right)$ curves of Lemma~\ref{lm:opt_b} parametrized by $\sarea$, the corresponding value of $\farea$ is determined by Lemma~\ref{lm:card}. This implies that
\begin{equation}
\farea\cdot \sarea = \left[ 2d\left(q-1\right) -\sarea -\frac{5d^2}{3} \right] \sarea.
\end{equation}
Taking the derivative of the right-hand side with respect to $\sarea$ and equating to $0$ gives
\begin{equation}
\sarea = d\left(q-1\right) -\frac{5d^2}{6} = \farea.
\end{equation}
This proves the right-hand sides of~\eqref{eq:parabola2} and~\eqref{eq:parabola_sumrate}.
\end{proof}
There are two interesting conclusions from Theorem~\ref{th:main_result}. First is that the introduction of the \wi constraint changed the shape of the curve from a hyperbola to another regular shape: a parabola. Second is that for the \wi case the cardinalities that maximize the sum-rate turn out to be the fixed-rate cardinalities. This favorable property does not exist in the unconstrained case (for unconstrained WOM the fixed-rate property costs sub-optimality in sum-rate). We next show an example of a code constructed with the help of Theorem~\ref{th:main_result} followed by the discretization step.
\begin{example}\label{ex_discrete_d}
Let us consider the following $8$-level $2$-cell $2$-write \wi WOM code with imbalance parameter of $d=3$, $\cC_{3\mhyphen imb}\left(2,8,2,(\farea,\sarea)\right)$. By Theorem~\ref{th:main_result}, the optimal boundary is given by
\begin{equation}
2\left(7-x\right)\left(7-\beta_3\left(x\right)\right) - \left(4-x\right)^2 - \left(4-\beta_3\left(x\right)\right)^2 = 27.\label{eq:ex_d_lattice}
\end{equation}
The continuous cardinalities corresponding to the boundary of~\eqref{eq:ex_d_lattice} are $(\farea,\sarea)=(13.5,13.5)$. But after discretization we obtain in Fig.~\ref{fig:Example4} discrete cardinalities $(M_1,M_2)=(18,21)$ (it is possible to have $M_i>\iarea$ because a point can be in the region without its entire unit square). In particular, the resulting WOM code is not fixed-rate even though the continuous boundary is fixed-area. The decoding function of $\cC_{3\mhyphen imb}\left(2,8,2,(M_1,M_2)=(18,21)\right)$ is presented in Fig.~\ref{fig:Example4}.
\begin{figure}[htbp]
   \centering
   \includegraphics[height=2.1in,keepaspectratio=true,width=0.5\textwidth]{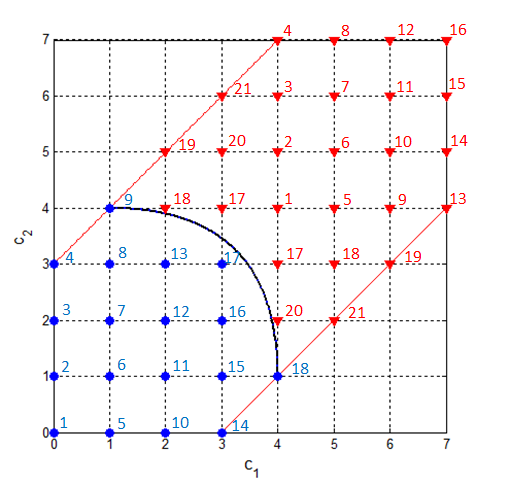}
   \caption{Decoding function $\psi$ for the code $\cC_{3\mhyphen imb}\left(2,8,2,(M_1,M_2)=(18,21)\right)$. Logical states for the first write are the labels of the circle physical states, and for the second write are the labels of the triangle physical states. The boundary between the two writes is a parabola. }
   \label{fig:Example4}
\end{figure}
The sum-rate of the non-balanced code $\cC\left(2,8,2,(M_1,M_2)=(23,24)\right)$ of Example~\ref{ex2} (with the same $q$, $n$ and $t$) is $4.55$. By constraining the code with the $d=3$ imbalance parameter the sum-rate in Example~\ref{ex_discrete_d} is reduced to $4.28$ ($6\%$ reduction). Let us now compare the lattice-based results to direct \wi WOM codes constructed by Construction~\ref{const2}. To obtain a $t=2$ code from Construction~\ref{const2} with $q=8$, the largest rate parameter is $a=4$, corresponding to $M=15$. This gives a code $\cC_{3\mhyphen imb}\left(2,8,2,(M_1,M_2)=(15,15)\right)$, which has a sum-rate of $3.91$, lower than the lattice-based construction for this example.
\end{example}
\subsection{Performance Comparison}
We now compare the unconstrained lattice-based WOM codes of~\cite{Bhatia},\cite{bhatia2} with the new lattice-based \wi WOM codes. We make the comparison over the {\em continuous} sum-rate calculated from $\farea$,$\sarea$ (before discretization). Table~\ref{tb:comp2} presents the sum-rate of the codes for different values of $d$ and $q$. The reader can notice that when the imbalance takes the highest value allowed by Theorem~\ref{th:main_result}: $d=\left\lfloor \frac{3\left(q-1\right)}{7}\right\rfloor$, the sum-rate is compromised by approximately $5\%$. Naturally, when $d$ decreases the sum-rate decreases due to the more limiting constraint imposed by the $d$-imbalance model.
\begin{table} [ht]
\caption{Continuous sum-rate comparison of $2$-cell lattice-based WOM codes for $q=8$ and $q=16$. }
\centering
{\extrarowsep=1mm
\begin{tabu}{|c||c|c|}%7
  \hline
  % \multicolumn{2}{|c||}{$q=8$} & \multicolumn{6}{|c|}{Total number of errors $\left\|\vect{e}\right\|$} \\
  %\hline
  % after \\: \hline or \cline{col1-col2} \cline{col3-col4} ...
  q & $ d $ & \textbf{$\cR_{sum}$}   \\\tabucline[1.2pt]{-}%\\ \hline %\hhline{|=|=|=|=|}
  %\hline
  8 & - &  3.97 \\  \hline
  8 & 3 &  3.75  \\ \hline
  8 & 2 & 3.42 \\\tabucline[1pt]{-}
  16 & - &  6.17 \\  \hline
  16 & 6 &  5.91  \\ \hline
  16 & 5 &  5.76 \\ \hline
  16 & 4 &  5.54 \\ \hline
  16 & 3 &  5.29 \\ \hline
\end{tabu} } \label{tb:comp2}
\end{table}

\section{Wordline ICI Reduction by \wi WOM Codes}\label{sec:practical}
To this point, we have presented code constructions that bound the imbalance between \emph{two} memory cells. In flash practice, many more than two cells are updated together in a memory {\em page}, also called a {\em wordline}. To match the write granularity of the flash architecture, we will use a code $\cC_{\wiAB}\left(2,q,t,M\right)$ {\em on each pair} of adjacent cells in a wordline. Hence, a wordline includes concatenated pairs of WOM-coded cells, as depicted in Fig.~\ref{fig:block}.
\begin{figure}[htbp]
   \centering
   \includegraphics[height=2.1in,keepaspectratio=true,width=0.5\textwidth]{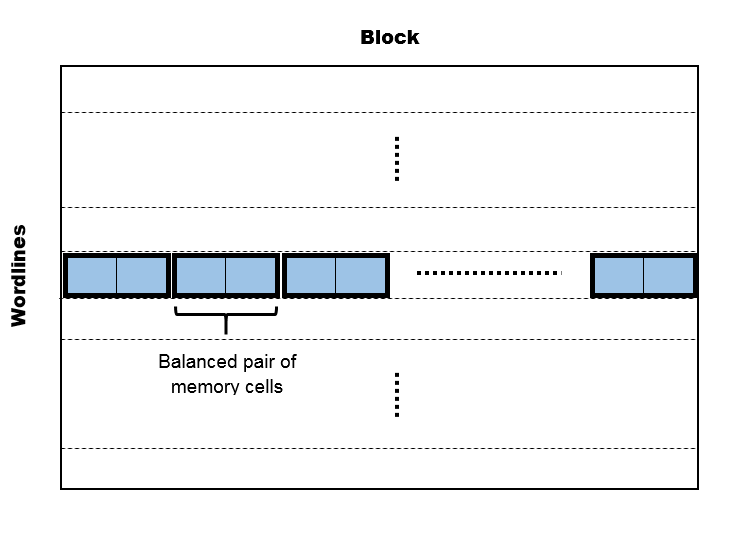}
   \caption{ A memory block consisting of wordlines, where each wordline includes concatenated pairs of $d$-imbalance codewords.}
   \label{fig:block}
\end{figure}
In the sequel we say that a set of cell levels is {\em $d$-balanced} if they satisfy the \wi constraint of~\eqref{eq:wi-d}.
\subsection{Inter-codeword balancing}
As can be seen in the following example, simple concatenation of codewords (of two memory cells each) does not guarantee that the $d$-imbalance property is maintained between any pair of adjacent cells.
\begin{example}\label{ex:trivial_update}
Let us assume we code two adjacent pairs of cells in the same wordline by the \wi WOM code $\cC_{3\mhyphen imb}\left(2,q,t,8\right)$ presented in Fig.~\ref{fig:WIcodeExp}. Suppose the first pair holds the logical state of $'1'$ by the physical state  $\left(1,0\right)$, and the second pair holds the logical state of $'5'$ by the physical state $\left(2,1\right)$. To this end, all four cells $\left(1,0,2,1\right)$ are $3$-balanced. Now, suppose that in the next wordline update we wish to only update the logical value of the second pair from $'5'$ to $'2'$. Therefore, the physical state of the first pair remains at $\left(1,0\right)$ while the physical state of the first pair is updated to $\left(4,2\right)$. Therefore, the four cell levels $\left(1,0,4,2\right)$ are no longer $3$-balanced because the two center cell levels $\left(0,4\right)$ are $4$ apart.
\end{example}
In order to keep the $d$-imbalance property among all the cells in the wordline, it is possible to use WOM codes with the \emph{synchronous} property~\cite{syncWOM}, which means that each of the $t$ writes has a disjoint set of physical states. This way, it is guaranteed that none of the codewords remain in the same physical state through the write sequence, and balance is maintained across codewords as well. However, requiring the synchronous property compromises the sum-rate relative to non-synchronous WOM codes with the same code parameters. \\
Interestingly, as we show next, we can maintain the $d$-imbalance property while using the (non-synchronous) Construction~\ref{const2} with a more clever wordline update process. We start with some formal definitions and a proposition.
\begin{definition}
Given a code $\cC_{\wiAB}\left(2,q,t,M\right)$ applied to a wordline with $2N$ cells, $\cM^i$ is defined as the \textbf{wordline data vector} of the $i$-th write, where the elements of $\cM^i = \left[m_1^i,m_2^i,\ldots,m_N^i\right]$ are the logical states of the $N$ codewords. We also define $\vect{c^i}$ as the \textbf{physical states vector} of the $i$-th write, where the elements of $\vect{c^i}=\left[\underline{c}_1^i,\underline{c}_2^i,\ldots,\underline{c}_{N}^i\right]$ are the physical states of the $N$ codewords.
\end{definition}
\begin{definition}
A physical state $\left(c_1,c_2\right)$ is called a \textbf{frontier} state of the $i$-th write if it can be reached after $i$ writes, and no other state $(c'_1,c'_2)$ with $c'_1\geq c_1$, $c'_2\geq c_2$ can be reached after $i$ writes.
\end{definition}
Informally, frontier states are the ``worst'' states to be in after $i$ writes, and the only ones we need to consider for the code correctness.
\begin{definition}
The frontier states of the $i$-th write are denoted by the set $\cF_i$. The subset of frontier states that can be accessed from the physical state $\left(c_1,c_2\right)$ is denoted by $\cF_i\left(c_1,c_2\right)$.
\end{definition}
\begin{example} \label{ex:non_balanced}
Let us consider the \wi WOM code $\cC_{3\mhyphen imb}\left(2,q,t,8\right)$ whose decoding function is given in Fig.~\ref{fig:WIcodeExp}. The frontier states of the first write are $\cF_1 = \left\{\left(1,2\right), \left(2,1\right) \right\}$. The frontiers of the second and third writes are given by $\cF_2 = \left\{\left(2,4\right), \left(3,3\right), \left(4,2\right) \right\}$ and $\cF_3 = \left\{\left(5,5\right) \right\}$, respectively.
\end{example}
Our objective now is to present an update process that guarantees that after each write all the cells in the wordline are $d$-balanced. In particular, every pair of adjacent cells -- both within and across codewords -- will be $d$-balanced. The following proposition provides the basis for that update process.
\begin{proposition} \label{prop:frontiers}
Let us consider a \wi WOM code $\cC_{\wiAB}\left(2,q,t,M\right)$ constructed by Construction~\ref{const2}. If $(c_1,c_2)$ is a frontier state of the $(i-1)$-th write, and $(c_3,c_4)$ is a frontier state of the $i$-th write, then all of $c_1,c_2,c_3,c_4$ satisfy the $d=a$-imbalance constraint.
\end{proposition}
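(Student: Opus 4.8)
The plan is to reduce the four-cell imbalance to a statement about the extreme coordinates of two consecutive frontiers, and then read those extremes off Construction~\ref{const2}. For a write index $i$, let $\ell_i$ and $h_i$ denote the smallest and largest coordinate value occurring among the states of $\cF_i$ (with the convention $\cF_0=\{(0,0)\}$, so $\ell_0=h_0=0$). The quantity to be bounded, $\max_{p,r}|c_p-c_r|$ over $c_1,c_2,c_3,c_4$, equals the range $\max(c_1,c_2,c_3,c_4)-\min(c_1,c_2,c_3,c_4)$. Since $\max(c_1,c_2)\leq h_{i-1}$, $\max(c_3,c_4)\leq h_i$, $\min(c_1,c_2)\geq\ell_{i-1}$ and $\min(c_3,c_4)\geq\ell_i$ by definition of the frontiers, this range is at most $\max(h_{i-1},h_i)-\min(\ell_{i-1},\ell_i)$. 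Thus it suffices to show this last expression is at most $a$ for every $i$.

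First I would pin down $\ell_i$ and $h_i$. By the periodic extension of Construction~\ref{const2}, the $k$-th copy is the base copy of Fig.~\ref{fig:WIcode} translated by $(k(3a-4),k(3a-4))$, so it is enough to record the three frontiers of one copy and add $k(3a-4)$ to all coordinates. Reading Fig.~\ref{fig:WIcode} (and matching Example~\ref{ex:non_balanced} for $a=3$), the base-copy frontiers are $\cF_{1}=\{(a-2,a-1),(a-1,a-2)\}$, the anti-diagonal segment $\cF_{2}=\{(x,y):x+y=4a-6,\ 2a-4\leq x,y\leq 2a-2\}$, and the single corner $\cF_{3}=\{(3a-4,3a-4)\}$. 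Hence, writing $B=k(3a-4)$, I get $(\ell_i,h_i)=(B,B)$ for $i=3k$, $(B+a-2,B+a-1)$ for $i=3k+1$, and $(B+2a-4,B+2a-2)$ for $i=3k+2$. Both sequences $\ell_i$ and $h_i$ are non-decreasing in $i$ (their consecutive differences are $a-2,a-2,a$ and $a-1,a-1,a-2$ respectively, all positive for $a>2$), so $\max(h_{i-1},h_i)=h_i$ and $\min(\ell_{i-1},\ell_i)=\ell_{i-1}$.

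It then remains to verify $h_i-\ell_{i-1}\leq a$ in the three residue classes of $i$ modulo $3$; the translation $B$ cancels in each difference. For $i\equiv 1\pmod{3}$ one gets $h_i-\ell_{i-1}=(a-1)-0=a-1$; for $i\equiv 2\pmod{3}$, $(2a-2)-(a-2)=a$; and for $i\equiv 0\pmod{3}$, $(3a-4)-(2a-4)=a$. All three are $\leq a$, which proves the claim.

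The step I expect to be delicate is the middle one: justifying that the second-write frontier $\cF_2$ stays inside the coordinate window $[2a-4,2a-2]$ for general $a$, i.e., that the second write consumes imbalance at most $2$ rather than the full budget $a$. For $a=3$ this is given explicitly in Example~\ref{ex:non_balanced}, but for general $a$ it must be extracted from the shape of the second region of Fig.~\ref{fig:WIcode}; the two tight cases $i\equiv 2,0$ show there is no slack, so overestimating $h_2$ or underestimating $\ell_2$ by even one unit would break the bound. Everything else (the reduction and the arithmetic) is routine once these frontier windows are fixed.
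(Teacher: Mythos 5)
Your proof is correct and follows essentially the same route as the paper's: both use the periodic structure of Construction~\ref{const2} to reduce to one period, read the explicit frontiers off Fig.~\ref{fig:WIcode} (your anti-diagonal description of $\cF_2$ coincides exactly with the paper's three states $(2a-2,2a-4)$, $(2a-3,2a-3)$, $(2a-4,2a-2)$, and your $\cF_1$, $\cF_3$ match as well), and then check that levels in consecutive frontiers differ by at most $a$. Your extreme-coordinate bookkeeping $(\ell_i,h_i)$ and the three residue-class checks simply make systematic the step the paper dispatches as ``easily verified,'' including the copy-boundary transition that the paper's appeal to periodicity covers implicitly.
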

\begin{proof}
Due to the periodic nature of $\cC_{\wiAB}\left(2,q,t,M\right)$, it is sufficient to prove the statement for the first three writes only. From Fig.~\ref{fig:WIcode} we can see that the frontiers of the first write are $\cF_1=\left\{ \left(a-1,a-2\right),\, \left(a-2,a-1\right) \right\}$. The frontiers of the second and third writes are given by $\cF_2=\left\{ \left(2a-2,2a-4\right),\, \left(2a-3,2a-3\right),\, \left(2a-4,2a-2\right)\right\}$ and $\cF_3=\left\{ \left(3a-4,3a-4\right)\right\}$, respectively. It can now be easily verified that any pair of levels taken from $\cF_{i-1}\cup \cF_i$ are at most $d=a$ apart.
\end{proof}
The implication of Proposition~\ref{prop:frontiers} is that it is sufficient to keep all the $N$ physical states in a wordline between frontier states of two adjacent writes $i-1$ and $i$, {\em inclusive} of states of both frontiers. This is achieved by the wordline update process given in Algorithm~\ref{alg:update}.
\begin{algorithm}[ht]\label{alg:update}
\SetInd{.03in}{.03in}
\SetKwInOut{Input}{input}
\SetKwInOut{Output}{output}
\caption{WordlineUpdate}
\Input{$\cM^{i}$,$\vect{c^{i-1}}$}
\Output{$\vect{c^i}$}

$\cM^{i-1} = \psi\left( \vect{c^{i-1}} \right)$

for $j=1$ to $N$

\qquad choose $\underline{c}\in \cF_{i-1}\left(\underline{c}_{j}^{i-1}\right)$ arbitrarily \\
\qquad $\underline{c}_{j}^{i} = \mu\left(\underline{c}, m_j^i \right)$\\

end
\end{algorithm}

In simple words, Algorithm~\ref{alg:update} guarantees that after $i$ writes every physical state in the wordline will be at least in a frontier state of the $(i-1)$-th write (and at most in a frontier state of the $i$-th write). We now prove the correctness of the wordline update process.
\begin{theorem}
If a WOM code $\cC_{\wiAB}\left(2,q,t,M\right)$ by Construction~\ref{const2} is used in a full wordline with the update process of Algorithm~\ref{alg:update}, then the $d$-imbalance property is maintained on all the cells of the wordline.
\end{theorem}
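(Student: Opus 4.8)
The plan is to reduce the global ``every pair of adjacent cells'' requirement to a statement about two cells in \emph{different} codewords, and then to feed Proposition~\ref{prop:frontiers} through an invariant maintained by Algorithm~\ref{alg:update}. Two cells inside the same codeword are $d$-balanced directly by Construction~\ref{const2} (all of its states satisfy $|y-x|\le a=d$), so the whole content lies in the cross-codeword case.

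The key step is an invariant, which I would establish by induction on the write index $i$: after the $i$-th write, \emph{every} codeword $j$ satisfies $\underline{c}\le \underline{c}_j^i\le \underline{c}'$ element-wise for some $\underline{c}\in\cF_{i-1}$ and some $\underline{c}'\in\cF_i$. The lower bound is immediate from the algorithm together with the WOM monotonicity of $\mu$: the loop sets $\underline{c}_j^i=\mu(\underline{c},m_j^i)$ with $\underline{c}\in\cF_{i-1}$, and $\mu$ never lowers a coordinate, so $\underline{c}_j^i\ge\underline{c}$. For the upper bound I would argue that $\underline{c}_j^i$ is reachable in exactly $i$ writes (it is one write applied to $\underline{c}$, a state reachable in $i-1$ writes), so by the very definition of a frontier it is dominated by some state of $\cF_i$. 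Well-definedness of the loop comes from the same invariant one step earlier: the upper-bound clause after write $i-1$ says $\underline{c}_j^{i-1}$ is dominated by a frontier of write $i-1$, so $\cF_{i-1}(\underline{c}_j^{i-1})$ is nonempty and the arbitrary choice is legal; that the $i$-th write from a frontier of write $i-1$ succeeds is exactly the write guarantee of Construction~\ref{const2}. The base case $i=1$ uses $\cF_0=\{(0,0)\}$ and is immediate, since after one write all cell levels lie in $[0,a-1]$ and $a-1<d$.

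Granting the invariant, the conclusion is a one-line consequence of Proposition~\ref{prop:frontiers}. Take any two cells in the wordline with levels $u\le v$. By the invariant $u$ is at least some coordinate $\ell$ of a state in $\cF_{i-1}$ and $v$ is at most some coordinate $h$ of a state in $\cF_i$; hence $v-u\le h-\ell$. Since $\ell$ and $h$ are coordinates of states in $\cF_{i-1}\cup\cF_i$, Proposition~\ref{prop:frontiers} gives $h-\ell\le d$, so $v-u\le d$. As the two cells were arbitrary, the wordline is $d$-balanced after each write.

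I expect the only real obstacle to be the upper-bound half of the invariant and its coupling to well-definedness: one must confirm that forcing each codeword onto a frontier of write $i-1$ before applying the $i$-th write keeps it in the band of write $i$ (dominated by $\cF_i$), which rests both on the ``reachable in $i$ writes $\Rightarrow$ below a frontier of write $i$'' characterization and on the inductively guaranteed accessibility of the intermediate frontier. The periodicity of Construction~\ref{const2}, already exploited in Proposition~\ref{prop:frontiers}, lets the same frontier-to-frontier bound be reused for every $i$, so no new geometric computation is needed beyond the base transitions.
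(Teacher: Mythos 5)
Your proof is correct and takes essentially the same route as the paper: the identical sandwich invariant (after write $i$, every codeword's physical state is bounded below element-wise by a state of $\cF_{i-1}$ and above by a state of $\cF_i$), with the lower bound from WOM monotonicity of $\mu$ applied to the frontier chosen by Algorithm~\ref{alg:update}, the upper bound from the definition of frontier states, and the conclusion via Proposition~\ref{prop:frontiers}. You merely spell out details the paper leaves implicit (well-definedness of the frontier choice, the base case, and the final cross-codeword level comparison), which strengthens rather than changes the argument.
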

\begin{proof}
We prove that after $i$ writes every physical state is bounded (element-wise) from above by an $i$ frontier state and from below by an $i-1$ frontier state. This claim together with Proposition~\ref{prop:frontiers} would establish the theorem statement. By Algorithm~\ref{alg:update}, the update function $\mu$ is invoked with a physical-state argument from $\cF_{i-1}$. So trivially by the WOM property the output of the update function must be bounded from below by a state in $\cF_{i-1}$. By the properties of the code, for any $m_j^i$ the output of $\mu$ is bounded from below by a state in $\cF_{i}$.
\end{proof}

\begin{example}
Let us now return to Example~\ref{ex:trivial_update}. The example begins with two logical states $'1'$ and $'5'$ stored in two adjacent codewords of $\cC_{3\mhyphen imb}\left(2,q,t,8\right)$. After updating the second logical state from $'5'$ to $'2'$, the physical states $\left(1,0,4,2\right)$ were no longer $3$-balanced. However, if we use Algorithm~\ref{alg:update} for update, the physical state of the first (not updated) logical state becomes $\left(3,2\right)$, which is the nearest physical state representing $'1'$ starting from a frontier state of the first write. The physical states are now $\left(3,2,4,2\right)$, and they satisfy the $3$-imbalance constraint.
\end{example}

\subsection{ICI analysis} \label{subsec:ici_an}
We now analyze the ICI reduction by using a \wi WOM code. We start with some background concerning the noise model and bit-error rate calculations.
\subsubsection{Noise model}
As was previously explained, different memory levels are represented by different levels of electrical charge or voltage. Due to many inherent physical limitations~\cite{errors2},\cite{errors1}, these voltage values include noise. As a consequence, each memory level is distributed over a range of voltages around the desired voltage level. The most common and simple model for flash voltage distributions is the Gaussian model, whereby a cell voltage level is one of $q$ discrete levels plus an additive Gaussian noise  $\nu \thicksim \cN\left(0,\sigma^2\right)$.

During read operation, the memory level is determined by a sequence of comparisons of the cell voltage level to some reference voltage levels~\cite{JSAC}. Therefore, the bit-error rate (BER) is given by~\cite{imm}
\begin{equation} \label{eq:BER}
BER = \frac{2\left(q-1\right)}{q}Q\left(\frac{V_{ref}}{\sigma}\right),
\end{equation}
where $V_{ref}$ is some normalized reference level between two adjacent voltage levels, and the $Q\left(x\right)$ function is
\begin{equation}
Q\left(x\right) = \frac{1}{\sqrt{2\pi}}\int_x^{\infty} e^{-\frac{t^2}{2}}dt.
\end{equation}
\subsubsection{ICI model}
A recent work~\cite{cai} combined a theoretic ICI model with empirical measurements and presented a practical model for ICI. According to this model, the threshold voltage change of some victim cell is given by
\begin{equation} \label{eq:ici_model}
\Delta V_{victim} = \sum\limits_x \sum\limits_y \alpha\left(x,y\right) \Delta V_{neighbor}\left(x,y\right)+\alpha_0 V_{victim}^{before},
\end{equation}
where $\alpha\left(x,y\right)$ and $\alpha_0$ are fitting coefficients, and $V_{victim}^{before}$ is the threshold voltage of the victim cell before interference. It was shown in~\cite{cai}, that practically the $\alpha\left(x,y\right)$ coefficients are significant for up to three closest neighboring cells only. Therefore, a cell is likely to suffer ICI if there is a significant voltage change in one of its closest neighboring cells. \\
However, as was described in section~\ref{sec:def}, the ISPP (incremental step pulse program) write process is also a key feature in the ICI mechanism. In the ISPP method each program level induces a sequence of program pulses followed by a verification process to assure proximity to the target level. Each program step increases the voltage level of a cell by $\Delta V_{pp}$, which is significantly smaller than the actual voltage levels representing memory values. The voltage raise due to a single program step can be modeled~\cite{els_mod} by adding a uniform random variable in the range of $\left[0,\Delta V_{pp}\right]$. The program process that includes a sequence of $\seqN$ program steps is thereby modeled by the sum of $\seqN$ such uniform random variables, giving the well-known Gaussian shaped voltage level distributions, when $\seqN \gg 1$.
\begin{proposition} \label{prop:ici_noise}
The ICI noise $\nu_{ici}$ of a victim cell, due to $\seqN$ program steps in a nearby aggressor cell, is an Irwin-Hall distributed random variable, with mean $\mu_{ici}=\alpha \frac{\seqN \Delta V_{pp}}{2}$ and variance $\sigma_{ici}^2 = \alpha^2 \frac{\seqN}{2}\frac{\left(\Delta V_{pp} +1\right)^2-1}{12}$, where $\alpha$ is the capacitance ratio between the two cells and $\Delta V_{pp}$ is the ISPP voltage step.
\end{proposition}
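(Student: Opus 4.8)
The plan is to realize $\nu_{ici}$ as a scaled sum of independent uniform increments, and then read off both its law and its first two moments by elementary means. First I would specialize the empirical ICI model~\eqref{eq:ici_model} to a single dominant aggressor: among the neighbor contributions $\alpha(x,y)\Delta V_{neighbor}(x,y)$ we retain only the nearby cell under consideration, with coupling coefficient $\alpha$, and we drop the self-term $\alpha_0 V_{victim}^{before}$ since it is not random interference injected by the aggressor. This reduces the model to $\nu_{ici}=\alpha\,\Delta V_{neighbor}$, so the task becomes characterizing the aggressor's total threshold-voltage rise $\Delta V_{neighbor}$ accumulated over its ISPP programming.

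Next I would invoke the ISPP model described just before the proposition: each program pulse raises the aggressor voltage by an independent uniform increment over a range of width $\Delta V_{pp}$. Writing these increments as i.i.d. uniform random variables $U_1,\ldots,U_{\seqN}$, we have $\Delta V_{neighbor}=\sum_{i}U_i$ and hence $\nu_{ici}=\alpha\sum_{i}U_i$. Because a sum of i.i.d. uniform variables is by definition an Irwin--Hall variable (here scaled by $\alpha$ and by the step width), this immediately establishes the distributional claim. The mean then follows from linearity of expectation, $\mu_{ici}=\alpha\sum_i\mathbb{E}[U_i]=\alpha\,\seqN\,\Delta V_{pp}/2$, using $\mathbb{E}[U_i]=\Delta V_{pp}/2$ for each increment.

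For the variance I would use independence of the increments, $\sigma_{ici}^2=\alpha^2\sum_i\mathrm{Var}(U_i)$, and substitute the fluctuation of a uniform increment over the discrete grid $\{0,\ldots,\Delta V_{pp}\}$, namely $\mathrm{Var}(U_i)=[(\Delta V_{pp}+1)^2-1]/12$; this is exactly the numerator appearing in the claimed $\sigma_{ici}^2$. The one point that is not purely mechanical is the effective count $\seqN/2$ (rather than $\seqN$) in the variance. This must be traced to the verify-based compensation of ISPP discussed in Section~\ref{sec:def}: interference injected while the victim is still being programmed is largely cancelled by the victim's own verify loop, so the compensation acts on the injected \emph{fluctuation} over the programming window, reducing the effective per-pulse variance by a factor of two while leaving the per-pulse mean drift intact.

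I expect the main obstacle to be precisely this asymmetric bookkeeping: producing a factor $\seqN/2$ in $\sigma_{ici}^2$ while the mean retains the full $\seqN$. This cannot arise from simply discarding half of the pulses, since that would also halve $\mu_{ici}$; it must instead reflect compensation acting on the random fluctuations but not on the average drift. Pinning down the correct effective number and per-pulse statistics of the contributing increments so that $\mu_{ici}$ and $\sigma_{ici}^2$ emerge with their stated constants simultaneously is the delicate modeling step. By contrast, once the increment model is fixed, recognizing the Irwin--Hall law and evaluating the two moments by linearity and independence is routine.
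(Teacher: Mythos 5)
Your realization of $\nu_{ici}$ as $\alpha\sum_{i=1}^{\seqN}U_i$ with i.i.d.\ uniform ISPP increments, followed by the observation that such a sum is Irwin--Hall by definition and that the two moments follow from linearity and independence, is exactly the paper's proof: the paper writes $\nu_{ici}=\alpha\sum_i U_i$ with $U_i$ uniform on $\left[0,\Delta V_{pp}\right]$, cites the Irwin--Hall reference, and concludes that the mean is the sum of the $U_i$ means and the variance is the sum of the $U_i$ variances. So for the distributional claim and for $\mu_{ici}=\alpha\seqN\Delta V_{pp}/2$ you and the paper coincide.

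On the variance you have correctly identified a real problem, but you should know it is a defect of the paper, not a step you failed to reconstruct. The paper's own argument, taken literally, gives $\sigma_{ici}^2=\alpha^2\seqN\,\mathrm{Var}(U_i)=\alpha^2\seqN\Delta V_{pp}^2/12$ for the continuous uniform, which matches neither the factor $\seqN/2$ nor the numerator $\left(\Delta V_{pp}+1\right)^2-1$ in the statement; the paper offers no derivation of either constant. Your discrete-grid reading of $\mathrm{Var}(U_i)$ at least recovers the numerator, but your proposed origin of the $\seqN/2$ --- verify-based compensation suppressing per-pulse fluctuations while preserving the mean drift --- is speculation with no counterpart in the paper, and as you yourself note it is not a derivation (any mechanism that simply discards pulses would also shrink the mean). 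For what it is worth, it is the stated formula, not the proof, that the paper relies on downstream: substituting $\seqN=2\Delta V/\Delta V_{pp}$ into the stated $\sigma_{ici}^2$ yields $\alpha^2\Delta V\left(\Delta V_{pp}+2\right)/12$, exactly the variance quoted in the subsequent BER proposition. So your write-up is as rigorous as the original on the parts that can be proved, and more candid about the part that cannot; just do not present the compensation heuristic as if it closed the gap.
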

\begin{proof}
The ICI noise is the sum of $\seqN$ independent random variables $U_i$ uniformly distributed over $\left[0,\Delta V_{pp}\right]$ multiplied by the capacitance coupling $\alpha$~\cite{els_mod}.
\begin{equation}
\nu_{ici} = \alpha \sum_{i=0}^{\seqN} U_i.
\end{equation}
The sum of uniformly distributed independent random variables is Irwin-Hall distributed random variable~\cite{irwin-hall}. Its mean is the sum of the $U_i$ means and its variance is the sum of the $U_i$ variances.
\end{proof}
As was described in~\cite{berman}, when cells are programmed by ISPP, it is possible to compensate ICI errors in the cells that have not reached their target values. If during the write sequence the aggressor cell causes ICI in the victim cell, it can be detected by the verification process of the victim cell leading to canceling excess program steps. However, when a certain cell reached its target level, updating its neighbor cell can still cause ICI.
\begin{proposition}
Let us consider two adjacent $q$-ary cells with additive Gaussian voltage noise $\cN\left(0,\sigma^2\right)$ read by the normalized threshold voltage $V_{ref}$. Let us now assume that in some ISPP operation, the target voltage levels of the two cells are $V_1$ and $V_2$ where $V_2 > V_1>0$. The BER of the victim cell, due to ICI, at the end of the write operation is given by
\begin{equation} \label{eq:BERici}
BER_{ici}\left(\Delta V\right) \simeq \frac{2\left(q-1\right)}{q}Q\left( \frac{V_{ref}-\alpha \Delta V}{\sigma } \right),
\end{equation}
where $\Delta V = V_2 - V_1$, and $\alpha$ is the capacitance coupling between the two cells.
\end{proposition}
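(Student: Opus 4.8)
The plan is to reduce the claim to the baseline BER expression~\eqref{eq:BER} by showing that the net effect of the uncompensated ICI is a deterministic upward shift of the victim cell's voltage by $\alpha\Delta V$, which is equivalent to shrinking the read margin from $V_{ref}$ to $V_{ref}-\alpha\Delta V$. First I would identify the roles of the two cells: since $V_2>V_1$, the cell targeting $V_1$ completes its ISPP sequence first and is thereafter inhibited, while the cell targeting $V_2$ keeps receiving program pulses; hence the $V_1$ cell is the victim and the $V_2$ cell is the aggressor. Invoking the compensation mechanism of~\cite{berman}, all ICI induced on the victim \emph{before} it reaches $V_1$ is absorbed by the victim's own remaining program steps and does not contribute to the final error. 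The only uncompensated disturbance is the ICI induced during the aggressor's remaining rise from (approximately) $V_1$ to $V_2$, a net neighbor voltage change of exactly $\Delta V_{neighbor}=\Delta V$.

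Next I would quantify this uncompensated ICI using Proposition~\ref{prop:ici_noise}. Raising the aggressor's voltage by $\Delta V$ requires $\seqN = 2\Delta V/\Delta V_{pp}$ program steps, since each step contributes a mean increment $\Delta V_{pp}/2$. Substituting into Proposition~\ref{prop:ici_noise}, the Irwin--Hall ICI has mean $\mu_{ici}=\alpha\seqN\Delta V_{pp}/2 = \alpha\Delta V$ and variance $\sigma_{ici}^2 = \alpha^2(\seqN/2)\bigl((\Delta V_{pp}+1)^2-1\bigr)/12$. Because $\Delta V_{pp}$ is by design much smaller than the inter-level spacing and than $\sigma$, this variance is negligible relative to $\sigma^2$; the self-coupling term $\alpha_0 V_{victim}^{before}$ and the far-neighbor contributions in~\eqref{eq:ici_model} are likewise dropped. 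This is precisely what licenses approximating the ICI by its deterministic mean shift $\alpha\Delta V$ and is the origin of the $\simeq$ in~\eqref{eq:BERici}.

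Finally I would assemble the BER. Adding the mean ICI shift to the victim's Gaussian readout, the victim's voltage is effectively distributed as $\cN\!\left(V_1+\alpha\Delta V,\sigma^2\right)$, so the probability that the threshold comparison against the reference at normalized distance $V_{ref}$ errs becomes $P(\nu+\alpha\Delta V>V_{ref})=Q\!\left((V_{ref}-\alpha\Delta V)/\sigma\right)$. Multiplying by the same $2(q-1)/q$ prefactor that appears in~\eqref{eq:BER} (which counts the level boundaries and the bit-to-level mapping for $q$-ary cells) yields~\eqref{eq:BERici}.

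The main obstacle is not the algebra but rigorously justifying the approximation step. One must argue carefully that only the \emph{mean} of the Irwin--Hall ICI survives, i.e. that $\sigma_{ici}^2$ together with the neglected terms of~\eqref{eq:ici_model} are small enough that the victim's effective noise variance remains $\sigma^2$, so that the entire effect collapses to the clean margin reduction $V_{ref}\mapsto V_{ref}-\alpha\Delta V$. A secondary subtlety worth pinning down is the lockstep assumption used to claim that the aggressor's uncompensated increase equals exactly $\Delta V=V_2-V_1$ (i.e. that both cells sit near $V_1$ at the instant the victim is inhibited), which relies on the two cells receiving a common pulse train under ISPP verification.
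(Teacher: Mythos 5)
Your overall route is the same as the paper's: identify the uncompensated ICI as the program steps occurring after the victim reaches $V_1$, count them as $\seqN = 2\Delta V/\Delta V_{pp}$, invoke Proposition~\ref{prop:ici_noise} (with the Gaussian limit of the Irwin--Hall distribution) to get a mean shift $\alpha\Delta V$, and fold that shift into the read margin of~\eqref{eq:BER} with the same $2(q-1)/q$ prefactor. The one place where you diverge --- and where your argument genuinely fails --- is the justification for discarding the ICI variance. You claim $\sigma_{ici}^2$ is negligible ``because $\Delta V_{pp}$ is much smaller than the inter-level spacing and than $\sigma$.'' But after substituting $\seqN = 2\Delta V/\Delta V_{pp}$, the ICI variance becomes $\alpha^2\Delta V\left(\Delta V_{pp}+2\right)/12$, which does \emph{not} vanish as $\Delta V_{pp}\to 0$: its dominant part $\alpha^2\Delta V/6$ is independent of the pulse size, so smallness of $\Delta V_{pp}$ buys you nothing. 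Whether this variance is negligible hinges on comparing $\alpha^2\Delta V$ against $\sigma^2$, which your write-up never does --- you flag this yourself as ``the main obstacle'' but then resolve it with the wrong mechanism.

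The paper closes this gap with an extra idea you are missing: the read noise $\sigma^2$ is \emph{itself} generated by the same ISPP mechanism, namely the $\seqN' = 2V_1/\Delta V_{pp}$ pulses that raised the victim from $0$ to $V_1$, so by the same Irwin--Hall computation $\sigma^2 = V_1\left(\Delta V_{pp}+2\right)/12$. Consequently the variance ratio is exactly $\sigma_{ici}^2/\sigma^2 = \alpha^2\,\Delta V/V_1 \leq q\alpha^2$ (using $\Delta V/V_1\leq q$), and this is negligible because the coupling coefficient satisfies $\alpha\ll 1$ --- not because $\Delta V_{pp}$ is small. With that identification, $\sigma^2_{total} \leq \sigma^2\left(1+q\alpha^2\right)\approx\sigma^2$, and the remainder of your assembly (mean shift $\alpha\Delta V$ shrinking the margin to $V_{ref}-\alpha\Delta V$) goes through exactly as in the paper. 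So: same skeleton, but to make the approximation step legitimate you need the observation that $\sigma^2$ scales with $V_1$ through the ISPP process, or some equivalent quantitative comparison of $\alpha^2\Delta V$ with $\sigma^2$.
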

\begin{proof}
As was described, ICI effects can be compensated during ISPP operation as long as both cells have not reached their target voltage levels. Therefore, the number of program steps with a potential ICI effect is $\seqN = 2\Delta V / \Delta V_{pp}$. Assuming $\seqN \gg 1$, the ICI noise approaches the Gaussian distribution, and substituting this $\seqN$ in Proposition~\ref{prop:ici_noise} gives the mean and variance of the distribution $\alpha \Delta V$ and $\alpha^2 \Delta V \frac{\Delta V_{pp}+2}{12}$, respectively. As a result, given an additive Gaussian noise $\nu \sim \cN\left(0,\sigma^2\right)$, the total noise $\nu_{total}=\nu+\nu_{ici}$ is the sum of two independent Gaussian random variables distributed as
\begin{equation}
\nu_{total} \sim \cN\left(\alpha \Delta V , \sigma^2+\alpha^2 \Delta V \frac{\Delta V_{pp}+2}{12} \right).\label{eq:v_total}
\end{equation}
To simplify $\sigma_{total}$ in~\eqref{eq:v_total}, we recall that the distribution noise $\nu$ is itself a result of the ISPP pulses raising the voltage level from $0$ to $V_1$, taking $\seqN' =  2V_1 / \Delta V_{pp}$ steps. This gives $\sigma^2 = V_1 \frac{\Delta V_{pp}+2}{12} $ by an argument similar to Proposition~\ref{prop:ici_noise}. In addition, we can take $\frac{\Delta V}{V_1} \leq q$, hence
\begin{equation}
\sigma^2_{total} = \sigma^2+\alpha^2 \frac{\Delta V}{V_1} \sigma^2 \leq \sigma^2 \left(1+q\alpha^2\right).
\end{equation}
Given that $\alpha \ll 1$, we can approximate $\sigma_{total} = \sigma $. Applying the BER calculation of~\eqref{eq:BER} gives~\eqref{eq:BERici}.
\end{proof}
The main conclusion from this ICI model is that ICI errors are more likely when the difference between voltage levels of adjacent cells $\Delta V$ is high. Therefore,~\eqref{eq:BERici} motivates the \wi WOM codes we study here.

\subsubsection{BER improvement}
We now analyze the ICI reduction by using a \wi WOM code. We analyze the worst-case ICI scenario, in which cells with guaranteed \wi are compared with the extreme ICI case: a victim cell in erased state $0$ with neighboring aggressor programmed to level $q-1$.
\begin{theorem}  \label{th:BERfactor}
Using a \wi WOM code on a wordline of $q$-ary memory cells reduces worst-case ICI BER by multiplicative factor
\begin{equation} \label{eq:BERfactor}
\hspace*{-0.1cm}
\exp \left\{ \left(1-\frac{d}{q-1}\right) \frac{\alpha \Delta V}{\sigma} \left( \frac{2b_2 V_{ref}}{\sigma} +b_1 - \left(1+\frac{d}{q-1}\right) \frac{b_2 \alpha \Delta V}{\sigma} \right)   \right\},
\end{equation}
where $\sigma^2$ is the variance of the voltage distribution, $V_{ref}$ is the normalized reference level for read,  $\alpha \Delta V$ is the voltage shift of the victim cell for a worst-case scenario unconstrained write, and $b_1$, $b_2$ are negative constants.
\end{theorem}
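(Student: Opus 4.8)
The plan is to compare two instances of the ICI-corrupted bit-error formula~\eqref{eq:BERici} --- one for an unconstrained worst-case write and one for a write obeying the \wi constraint --- and then to collapse their ratio to closed form using an exponential approximation of the $Q$-function. First I would fix the worst-case geometry. In the unconstrained scenario the victim cell sits in the erased state $0$ while its aggressor neighbor is programmed to the top level $q-1$; by~\eqref{eq:BERici} this produces a victim voltage shift $\alpha\Delta V$ and a BER proportional to $Q\!\left(\frac{V_{ref}-\alpha\Delta V}{\sigma}\right)$. Under a \wi code the level gap between adjacent cells can never exceed $d$, so the largest attainable shift is scaled down to $\frac{d}{q-1}\,\alpha\Delta V$, giving a BER proportional to $Q\!\left(\frac{V_{ref}-\frac{d}{q-1}\alpha\Delta V}{\sigma}\right)$. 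Since both expressions carry the identical prefactor $\frac{2(q-1)}{q}$ inherited from~\eqref{eq:BER}, forming the ratio of the constrained BER to the unconstrained BER cancels the prefactor and leaves a pure quotient of two $Q$-values.

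The second step is to linearize the Gaussian tail. I would invoke a standard second-order fit $Q(x)\approx K\exp\!\bigl(b_2 x^2 + b_1 x\bigr)$, valid in the operating regime, where $b_1,b_2$ are the negative constants quoted in the statement and the multiplicative constant $K$ (together with any additive term in the exponent) is immaterial because it cancels in the ratio. Writing $x_2=\frac{V_{ref}-\alpha\Delta V}{\sigma}$ for the unconstrained argument and $x_1=\frac{V_{ref}-\frac{d}{q-1}\alpha\Delta V}{\sigma}$ for the constrained one, the ratio reduces to $\exp\!\bigl(b_2(x_1^2-x_2^2)+b_1(x_1-x_2)\bigr)$.

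The final step is purely algebraic. Factoring the difference of squares as $x_1^2-x_2^2=(x_1-x_2)(x_1+x_2)$, I would substitute
\[
x_1-x_2=\Bigl(1-\tfrac{d}{q-1}\Bigr)\frac{\alpha\Delta V}{\sigma},\qquad
x_1+x_2=\frac{2V_{ref}-\bigl(1+\tfrac{d}{q-1}\bigr)\alpha\Delta V}{\sigma},
\]
and then collect terms. Pulling the common factor $\bigl(1-\frac{d}{q-1}\bigr)\frac{\alpha\Delta V}{\sigma}$ out front reproduces exactly the exponent in~\eqref{eq:BERfactor}, namely $\bigl(1-\frac{d}{q-1}\bigr)\frac{\alpha\Delta V}{\sigma}\bigl(\frac{2b_2 V_{ref}}{\sigma}+b_1-(1+\frac{d}{q-1})\frac{b_2\alpha\Delta V}{\sigma}\bigr)$, so the constrained-to-unconstrained BER ratio equals the claimed multiplicative factor.

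I expect the genuine subtlety to be the modelling step rather than the computation: justifying that the \wi constraint scales the effective worst-case shift precisely by $\frac{d}{q-1}$ --- that is, that the extreme programmed difference drops from $q-1$ to $d$ while the coupling coefficient $\alpha$ and the ISPP mechanics underlying~\eqref{eq:BERici} are otherwise unaffected --- and pinning down the regime in which the quadratic-exponent approximation of $Q$ with negative $b_1,b_2$ is legitimate. Once those two inputs are fixed, the difference-of-squares collapse to~\eqref{eq:BERfactor} is entirely routine.
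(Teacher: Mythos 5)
Your proposal is correct and follows essentially the same route as the paper's proof: both form the ratio of the two instances of~\eqref{eq:BERici} (constrained shift $\frac{d}{q-1}\alpha\Delta V$ versus unconstrained $\alpha\Delta V$, with the $\frac{2(q-1)}{q}$ prefactor cancelling), apply the quadratic-exponent approximation $Q(x)\approx e^{b_2x^2+b_1x+b_0}$ with negative fitted constants, and simplify to~\eqref{eq:BERfactor}. Your difference-of-squares factoring $x_1^2-x_2^2=(x_1-x_2)(x_1+x_2)$ is merely a tidier way of doing the same algebra the paper carries out by direct expansion.
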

\begin{proof}
The maximal voltage shift of the victim cell of an unconstrained write corresponding to updating level $0$ to level $q-1$ is given by $\alpha \Delta V$. By using the \wi WOM codes, the maximal voltage shift is given by $\frac{d}{q-1} \Delta V$. Therefore, by using~\eqref{eq:BERici} we get
\begin{equation} \label{eq:BERcomp}
\frac{BER_{\wiAB}}{BER_{uncon.}}= \frac{ Q\left( \frac{V_{ref}-\alpha \frac{d}{q-1} \Delta V}{\sigma } \right) }{Q\left( \frac{V_{ref}-\alpha \Delta V}{\sigma } \right)}.
\end{equation}
Let us now use the following approximation~\cite{Q} for the $Q\left(x\right)$ function valid for $x \subset \left[0,8\right]$
\begin{equation}
Q\left(x\right) \approx e^{b_2x^2+b_1x+b_0},
\end{equation}
where $b_0,b_1$ and $b_2$ are given by $-0.844, -0.502$ and $-0.469$ respectively. By using this approximation and simplifying,~\eqref{eq:BERcomp} becomes,
\begin{gather}
\frac{BER_{\wiAB}}{BER_{uncon.}}= \exp\left\{b_1\frac{\alpha \Delta V \left(1-\frac{d}{q-1}\right)}{\sigma}\right\} \cdot \nonumber \\
\cdot \exp\left\{ b_2 \frac{2\alpha \Delta V V_{ref}\left(1-\frac{d}{q-1}\right)- \alpha^2 \Delta V^2 \left(1-\frac{d^2}{\left(q-1\right)^2}\right)  }{\sigma^2}  \right\}. \label{eq:BERcomp2}
\end{gather}
Rearranging the terms in~\eqref{eq:BERcomp2} gives~\eqref{eq:BERfactor}.
\end{proof}
By examining Theorem~\ref{th:BERfactor}, we can first notice that the negative term in the exponent $-\left(1+\frac{d}{q-1}\right) \frac{b_2 \alpha \Delta V}{\sigma}$ is negligible due to $\alpha$ which is relatively small. Therefore, we can see that the the ICI BER improvement due to using \wi codes increases exponentially when the imbalance parameter $d$ decreases.
\begin{example}
The BER values for this example are taken from~\cite{ici-ber}, where we assume that these values are also valid for $q=8$. Initial raw BER of $2\cdot 10^{-5}$  yields $\frac{v_{ref}}{\sigma}=4.235$. After $6000$ P/E cycles, the BER becomes $5\cdot 10^{-3}$ due to ICI. That means the normalized voltage shift due to ICI is $\frac{\alpha \Delta V}{\sigma} = 1.472$. Using a the \wi WOM code reduces the worst-case normalized voltage shift to $1.472\frac{d}{q-1}=0.631$. Therefore, the new improved BER is given by
\begin{gather}
BER_{\wiAB} =  \frac{2\left(q-1\right)}{q}Q\left( \frac{V_{ref}-\frac{d}{q-1}\alpha \Delta V}{\sigma } \right)= \\ \nonumber
=\frac{14}{8}Q\left(  4.235 - 0.631\right)= 2.74\cdot 10^{-4}.
\end{gather}
That means, the BER due to ICI was improved by factor $18$ relative to the unconstrained write.
\end{example}
\section{Discussion and Conclusion}
\subsection{Bitline ICI}
We have shown how \wi codes hold a potnetial to significantly reduce ICI within a wordline. This is likely sufficient for the ICI seen in 3D vertical charge-trap flash memories (described in the Introduction). However, standard floating-gate flash memories also suffer from significant {\em bitline} ICI. Therefore, in order to reduce ICI in floating-gate flash memories, the \wi WOM codewords must also be balanced with WOM codewords in adjacent wordlines. We leave this interesting problem as future work.
\subsection{Application to wear leveling}
The same \wi properties suggested here for ICI reduction turn out to be useful for another important problem of flash storage: {\em wear leveling}. Due to limited lifetime of flash cells, it is essential to avoid exceeding the recommended write counts. In order to avoid a scenario in which some pages are worn faster than others, a wear-leveling technique is incorporated to the page mapping layer. This provides good inter-page wear leveling~\cite{EWL}. However, within a page cells can differ significantly in their wear (the total amount of charges written to them so far). These differences may be detrimental to the data reliability, as read/write procedures are commonly tuned to the wear state of the page. Using a WOM code with the \wi property can help equalize the intra-page wear, because no cell will be programmed to a level much higher than the rest of the page.

\subsection{Conclusion}
In this work we presented \wi WOM codes designed to reduce inter-cell interference in multi-level NVMs. Constructions that are simple to implement were given and analyzed. We also derived an upper bound on the number of guaranteed writes of a \wi WOM code and showed that our proposed construction is optimal for some parameters of the code. Lattice-based constructions were also derived and characterized in closed form for $t=2$ writes. Future work can include extending the presented two-cell WOM codes to WOM codes for $n \geq 3$, and the lattice-based codes also to $t\geq 3$.

\section{Acknowledgment}
This work was supported by the Israel Science Foundation, by the Israel Ministry of Science and Technology, and by a GIF Young Investigator grant.

\end{document}